\newtheorem{theorem}{Theorem}
\newtheorem{corollary}{Corollary}
\newtheorem{definition}{Definition}
\let\originalleft\left
  \let\originalright\right
\renewcommand{\left}{\mathopen{}\mathclose\bgroup\originalleft}
  \renewcommand{\right}{\aftergroup\egroup\originalright}
\newcommand{\second}{\mathrm{s}}                %Seconds
\newcommand{\de}{d_{\mathrm{eff}}}           %Effective Dimension
\newcommand{\id}{\mathbbm{1}}                %Identity
\newcommand{\ts}{{\otimes 2}}                %Tensor product squared
\newcommand{\expo}[1]{\operatorname{e}^{#1}} %Exponential
\newcommand{\tr}[1]{\operatorname{Tr}\left[ {#1} \right]} %Trace
\newcommand{\braket}[1]{\vphantom{\left(#1\right)^A} \left \langle #1 \right \rangle } %ket
\newcommand{\braketsmall}[1]{ \langle #1 \rangle } %ket
\newcommand{\ket}[1]{\left|#1 \right \rangle \vphantom{\left( #1 \right)^A}} %ket
\newcommand{\bra}[1]{\left\langle #1 \right | \vphantom{\left(#1\right)^A}} %bra
\newcommand{\hk}{{\mathcal{H}_K}} %\hk --> K SubSpace
\newcommand{\lorsubs}{_{L_T}}
\newcommand{\omt}{\omega\lorsubs}
\newcommand{\dif}{\mathrm{d}}
\newcommand{\teq}{T_{\mathrm{eq}}}
\newcommand{\Real}{\mathbb{R}}
\newcommand{\Hil}{\mathcal{H}}
\newcommand{\hil}{\mathcal{H}}
\newcommand{\cm}{\mathcal{M}}   %Curly M
\DeclareMathOperator\erf{erf}
\newcommand{\spn}{\operatorname{span}}
\newcommand{\rank}[1]{\relax\ifmmode\operatorname{rank}#1\else rank-$#1$\fi}
\newcommand{\Set}[2]{\left\lbrace #1 \,\middle|\, #2 \right \rbrace}
\newcommand{\set}[1]{\left\lbrace #1 \right \rbrace}
\begin{document}

\title{Quantum Systems Equilibrate Rapidly for Most Observables}

\author{Artur S.L. Malabarba}
\affiliation{H.H. Wills Physics Laboratory, University of Bristol, Tyndall Avenue, Bristol, BS8 1TL, U.K.}
\author{Luis Pedro Garc\'{i}a-Pintos}
\affiliation{School of Mathematics, University of Bristol, University Walk, Bristol BS8 1TW, U.K.}
\author{Noah Linden}
\affiliation{School of Mathematics, University of Bristol, University Walk, Bristol BS8 1TW, U.K.}
\author{Terence C. Farrelly}
\affiliation{DAMTP, Centre for Mathematical Sciences, Wilberforce Road, Cambridge, CB3 0WA, United Kingdom}
\author{Anthony J. Short}
\affiliation{H.H. Wills Physics Laboratory, University of Bristol, Tyndall Avenue, Bristol, BS8 1TL, U.K.}
\date{\today}

% \keywords{Quantum Equilibration}
% \pacs{05.70.Ln, %Nonequilibrium and irreversible thermodynamics
% XX.XX.XX (usually up to 3 numbers) }

\begin{abstract}
    Considering any Hamiltonian, any initial state, and measurements
    with a small number of possible outcomes compared to the dimension, we show
    that most measurements are already equilibrated. To investigate
    non-trivial equilibration we therefore consider a restricted set
    of measurements. When the initial
    state is spread over many energy levels, and we consider the set of
    observables for which this state is an eigenstate, most observables are initially out of equilibrium
    yet equilibrate rapidly. Moreover, all two-outcome measurements,
    where one of the projectors is of low rank, equilibrate rapidly.
\end{abstract}
\maketitle

The topic of equilibration time scales has been of much interest
lately~\cite{ShortFarrelly11,Masanes13,Brandao12,Vinayak12,Cramer12,Kastner13,Hutter13,Goldstein13,TorSan13}.
Given that it has been shown that quantum systems equilibrate under
rather general conditions~\cite{Lin09,Reimann10,Goldstein10}, it is
important to understand the time scale for the process. However,
attempts to derive an upper bound on equilibration time have resulted
in very large time scales. Short and Farrelly~\cite{ShortFarrelly11},
for instance, obtain a very general bound, of which we give an
improved derivation in appendix~A, which
scales with the dimension of the system, typically exponentially in
the number of particles.

A quantum system is said to undergo equilibration when its quantum state spends most of its time almost indistinguishable from a
fixed (time-invariant) steady-state. This is not the same as
thermalization, in which the steady-state is a Gibbs state. Thus,
thermalization is a special case of equilibration, and understanding
equilibration times is a key step in understanding
thermalization times.

When one discusses quantum equilibration, it is common to refer to
either subsystem equilibration~\cite{Lin09,Lin10}, in which a small
system equilibrates due to contact with a bath, or observable
equilibration, in which a fully closed system appears to equilibrate
due to the limited information offered by outcomes of a particular set
of observables. The latter was initially shown by
Reimann~\cite{Reimann08,Reimann10}, as a statement that the
expectation values of quantum observables stay predominantly close to
a static value, and was later built-on by Short~\cite{Short11}, who
showed that these results apply even if one considers all the
information that can be gathered from the observable, instead of just
the expectation value.

In this paper we consider any finite-dimensional system and any
Hamiltonian, and show that most $N$-outcome observables are initially
in equilibrium (for $N$ small compared to the dimension). To
investigate timescales we therefore turn to a natural class of
observables which are initially typically out of equilibrium -- those
with a definite initial value (i.e. observables for which the initial
state is an eigenstate). We show that, for pure initial states spread
over many energy levels, most of these observables equilibrate in very
short times -- in fact, most equilibrate essentially as fast as
possible. Moreover, in the case of two-outcome observables where one
of the projectors is of low rank we show that all observables
equilibrate fast (for any initial state spread over many energy
levels).

As will be clear in theorems~\ref{thr:3} and~\ref{thr:5}, when
referring to ``typical'' or ``most'' observables we mean that in the
context of the Haar measure. While this does include all observables
of physical significance, they constitute a small fraction of all
possible observables. Still, these results give us new insight into
the problem of equilibration time scales -- which was not available
through the use of strict upper bounds. This also raises the question
of what is special about physical measurements that makes them much
slower than most measurements.

To obtain these results, we address the issue of equilibration time
scales with respect to measurements composed of $N$ outcomes. As a
figure of merit for equilibration we will use the
\emph{distinguishability} $D_\mathcal{M}(\sigma,\rho)$ between two
states $\sigma$ and $\rho$ according to an observable $\mathcal{M} =
\{P_1, \ldots, P_N \}$, where the projectors $P_j$ represent the
different outcomes of the measurement. We define it so that after
performing the measurement, given full information about the two
states being compared, the distinguishability quantifies the
probability of successfully ``guessing'' which state the system was
in~\cite{Short11}, according to
\begin{equation}
    \label{eq:11}
    p_{\text{success}} = \frac{1}{2} + \frac{1}{2}D_\mathcal{M}(\sigma,\rho),
\end{equation}
where
\begin{equation}
    \label{eq:8}
    D_\mathcal{M}(\sigma,\rho) = \frac{1}{2} \sum_j |\! \tr{\sigma P_j} - \tr{\rho P_j} |.
\end{equation}
When $D_\mathcal{M}(\sigma,\rho) = 0$ the measurement does not provide
information that helps to distinguish $\sigma$ from $\rho$ (one might
as well toss an unbiased coin to decide). On the other hand, when
$D_\mathcal{M}(\sigma,\rho) = 1$ the states are perfectly
discriminated by this measurement.

In the special case in which the measurement has two outcomes (i.e.
$\mathcal{M} =\{P, \id-P\}$), the distinguishability is given by
$D_{\mathcal{M}}(\sigma, \rho) = \left|\tr{\sigma P} - \tr{\rho
    P}\right|$, and we will denote it by $D_{P}(\sigma, \rho)$.

Given an initial state $\rho$ evolving under a Hamiltonian $H$, we say
equilibration has taken place at time $\teq$ when, for some small
constant $\varepsilon>0$,
\begin{equation}
    \label{eq:28}
    \braket{D_\mathcal{M}(\rho_t,\omega) }_T \leq \varepsilon,\quad \forall T> \teq,
\end{equation}
where $\braket{f(t)}_T = \frac{1}{T} \int_{0}^{T} f(t) \dif t$,
$\omega = \lim_{T \rightarrow \infty} \braket{\rho_t}_T$ is the
time-averaged state and $\rho_t = \expo{-iHt}\rho\expo{iHt}$ is the
evolved state. When equilibration does take place, $\omega$ is also
called the equilibrium state~\footnote{Note that this limit is
  well-defined, and is obtained by decohering $\rho$ in the energy
  eigenbasis. i.e. $\omega = \sum_n \mathcal{Q}_n\rho \mathcal{Q}_n$
  where $\mathcal{Q}_n$ is the projector onto the $n$-th energy
  eigenspace.}.

In words, equilibration with respect to a measurement has taken place
when the time-averaged distinguishability for this measurement falls
below a small value. Since the distinguishability is a positive
quantity this means that, for any $T>\teq$, the instantaneous state $\rho_t$ is essentially
indistinguishable from $\omega$ for almost all times in the interval $[0,T]$.

It is worth stating that we are not claiming thermalization, in which
the equilibrium state is a thermal state. This definition of
equilibration, while necessary for thermalization~\cite{Lin09}, is a
separate issue and guarantees only the approach to a steady state.

Two parameters of the initial state that will be of importance to us
are the effective dimension $\de$ and the energy standard deviation
$\sigma_E$, defined by
\begin{equation}
    \label{eq:26}
    \de = \left[ \sum_{n=1}^{\tilde{d}} p_n^2 \right]^{-1} \quad\quad\quad \sigma_E^2 = \sum_{n=1}^{\tilde{d}} p_n \left( E_n - \bar{E} \right)^2.
\end{equation}
Here $\tilde{d}$ is the number of distinct energy levels $E_n$ with
probabilities $p_n = \tr{\mathcal{Q}_n \rho}$, where $\mathcal{Q}_n$
is the projector onto the eigenspace with energy $E_n$, and $\bar{E} =
\sum_{n=1}^{\tilde{d}} p_n E_n$. The effective dimension gives an
estimate of how many energy levels $\rho$ occupies with significant
probability \footnote{In particular, if the state is spread evenly
  over $N$ energies then $\de=N$}, and $\de \gg 1$ is the key
requirement for equilibration to take place, as we will see. The
energy standard deviation, in turn, will take a primary role in the
time scale of equilibration. Note that we use $\hbar=1$ throughout.

Given any $d$-dimensional Hamiltonian and any initial state with
support on many energy levels (high effective dimension), we present
the following results regarding the time scale of equilibration:
\begin{enumerate}
   \item%
    we show that \emph{all} two-outcome measurements, for which one of
    the projectors $P$ has small rank, have very short time scales
    ($\teq \sim \rank{P}/\sigma_E$);
   \item%
    for $N$-outcome measurements, we prove that \emph{most}
    measurements are already equilibrated, and that \emph{most}
    measurements with a definite initial value (i.e. for which the
    initial state is a pure eigenstate) equilibrate with extremely
    short time scales ($\teq \sim 1/ \sigma_E$). These results hold as
    long as $N \ll d$ but regardless of the rank of the projectors.
\end{enumerate}

% We will prove these statements, in this order, in the following two
% sections.
The first statement shows how restricting the measurements can lead to
a whole family of observables that equilibrate fast, while the second
one refers to the time scales of typical measurements (under the Haar
measure), which are indeed also fast.

To understand why $1/{\sigma_E}$ is a fast time scale, one can
consider the uncertainty relation between $H$ and an observable $O$,
which states~\cite{merzbacher1998quantum} $2\sigma_E \sigma_O \geq
|\braket{[H,O]}| = |{\braketsmall{ \dot O}}|$. Thus, the minimum
necessary time for the expectation value of \emph{any} observable to
vary significantly is ${\sigma_O}/{|{\braketsmall{ \dot O}}|} \geq 1/
(2\sigma_E)$.

Previous related results on equilibration time scales have been
obtained by Goldstein, Hara and Tasaki~\cite{Goldstein13}. In that
paper the authors considered particular constructions of two-outcome
measurements that take a very long or very short time to equilibrate.
Interestingly, they find an example of a projector with high rank
which equilibrates fast, albeit with very particular properties. Our
first result proves fast equilibration of \emph{all} small rank
projectors, for \emph{any} system with high $\de$. Moreover, we find
that most observables also equilibrate fast (when $N\ll d$ and initial
state is pure with high $\de$). On the issue of slow time scales,
we give an alternative example to that in~\cite{Goldstein13} of a
measurement which equilibrates slowly ($\teq~\gtrsim~\de/\sigma_E$),
given a pure initial state with high $\de$.
% Both examples involve projectors of similar rank and time scales,
% yet it is not clear to us whether the two constructions are at all
% similar, which might be interesting to study in future works.

We end with a short discussion. Our results can also be stated almost
identically in terms of expectation value of observables
(distinguishability is used here as it provides a stronger statement
of equilibration).

\section{Fast Equilibration}
\label{sec:fast-equil-observ}
We now show that all systems with high effective dimension equilibrate
fast with respect to the two-outcome measurement $\mathcal{M} = \{P,
\id - P \}$, whenever either of the projectors is of sufficiently low rank.
Given  $K=\min \left\{ \rank{P},\rank{(\id-P)} \right\}$, we will show
 that the average distinguishability $\braket{D_P(\rho_t,\omega)}_T$ becomes small in a time of  the order
of $K/\sigma_E$. We start by defining the function $\eta_\epsilon$:
\begin{definition}
    \label{def:1}
    Given any Hamiltonian with spectrum $\{E_j \,|\,
    j=1,\ldots,\tilde{d}\}$, a $d$-dimensional Hilbert space $\hil$
    and any state $\rho: \Hil \rightarrow \hil$ with probabilities
    $p_j$ associated to each energy level. For any $\epsilon > 0$
    \begin{equation}
        \label{eq:12}
        \eta_\epsilon = \max_{E \in \Real} \!\!\!\! \sum_{\substack{j:\\ E_j \in [E,E+\epsilon]}}\!\!\!\! p_j
    \end{equation}
    is the maximum probability that can be found inside any energy
    interval of size $\epsilon$.
\end{definition}
This function is useful because it captures both the state's energy
distribution and the Hamiltonian's energy spectrum.

The theorem is then simply an upper bound on the finite-time average
of the distinguishability $D_P(\rho_t,\omega)$ for any projector $P$.
\begin{theorem}[Fast equilibration]
    \label{thr:4}
    For any initial state $\rho: \Hil \rightarrow \Hil$, any Hamiltonian, and any
    projector $P$ where $K=\min \left\{ \rank{P},\rank{(\id-P)} \right\}$,
    \begin{equation}
        \label{eq:13}
        \braket{D_P(\rho_t,\omega)}_T \leq c\sqrt{\eta_{\frac{1}{T} } K },
    \end{equation}
    where $c = \frac{5\pi}{4}\sqrt{\frac{2}{ {1- \expo{-2}}} } +1 \approx 6.97$
\end{theorem}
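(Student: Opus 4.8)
The plan is to reduce the distinguishability to a single scalar function of time and then control its finite-time average by a second-moment estimate, preceded by a smoothing step that supplies the explicit constant $c$. Since $D_P(\rho_t,\omega)=D_{\id-P}(\rho_t,\omega)$, I may assume without loss of generality that $\rank{P}=K$ and write $P=\sum_{i=1}^{K}\den{\phi_i}{\phi_i}$ for orthonormal $\ket{\phi_i}$. Setting $g(t)=\tr{(\rho_t-\omega)P}$, the quantity to bound is $\braket{|g(t)|}_T$. Expanding in the energy eigenbasis $\{\ket a\}$, with $\omega$ the energy-dephased state, the equal-energy terms cancel, leaving $g(t)=\sum_{E_a\neq E_b}\rho_{ab}P_{ba}\,\expo{-i(E_a-E_b)t}$, a quasi-periodic signal built only from nonzero Bohr frequencies.

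Because the sharp window $\frac1T\int_0^T$ has slowly decaying Fourier tails, I would first replace it by a Gaussian-weighted average $w$ centred in $[0,T]$: choosing the width so the weight at the endpoints is suppressed by $\expo{-2}$ lets me dominate $\frac1T\int_0^T|g|\,\dif t$ by a constant multiple of $\int w(t)|g(t)|\,\dif t$, the constant being the normalisation $\sqrt{2/(1-\expo{-2})}$, while the mismatch between the box and the Gaussian produces the additive ``$+1$'' in $c$. This step fixes all the numerical constants and, crucially, avoids the non-degenerate-gap assumptions needed in infinite-time arguments. I would then apply Cauchy--Schwarz with weight $w$, $\int w|g|\le(\int w)^{1/2}(\int w|g|^2)^{1/2}$, reducing the problem to the weighted second moment $\int w(t)\,g(t)^2\,\dif t$.

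Expanding the second moment gives $\int w\,g^2=\sum_{a\neq b,\,c\neq d}\rho_{ab}P_{ba}\bar\rho_{cd}\bar P_{dc}\,\hat w\!\left((E_a-E_b)-(E_c-E_d)\right)$, where $\hat w\ge 0$ is again a Gaussian, now of width $\sim 1/T$ in the gap-difference variable. The heart of the proof is to show that this manifestly non-negative quadratic form is bounded by a constant times $K\,\eta_{1/T}$. The ingredients I would use are the rank constraint $\sum_{a,b}|P_{ab}|^2=\tr{P}=K$, positivity of $\rho$ in the form $|\rho_{ab}|^2\le p_a p_b$, and the defining property of $\eta_{1/T}$: the probability the state places in any energy interval of width $1/T$ is at most $\eta_{1/T}$. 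Intuitively, $\hat w$ forces $E_a-E_b$ and $E_c-E_d$ to agree to within $\sim 1/T$, so summing the state weights over the resulting near-resonant windows is exactly what $\eta_{1/T}$ caps, while the projector normalisation supplies the single factor of $K$.

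The main obstacle is precisely this last collapse. Estimating the four-index sum term-by-term in absolute values is too lossy: routing it through $\tr{P\,(\rho_t-\omega)\,P\,(\rho_t-\omega)}$ reintroduces a spurious second factor of $K$, while bounding $|\hat w|$ directly produces an uncontrolled gap-density rather than $\eta_{1/T}$. One must therefore retain the positive-kernel structure of $\hat w$ together with the joint operator structure of $P$ and $\rho$, grouping the summands by near-coincident energies rather than estimating each independently. Once $\int w\,g^2\le(\text{const})\,K\,\eta_{1/T}$ is established, combining it with the Cauchy--Schwarz and smoothing steps yields $\braket{|g(t)|}_T\le c\sqrt{\eta_{1/T}K}$ with the stated $c$, completing the proof.
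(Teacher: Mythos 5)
Your reduction to $g(t)=\tr{P(\rho_t-\omega)}$ and the smoothing step are fine, but the proof has a genuine gap at exactly the point you flag yourself: the pivotal inequality $\int w\,g^2\,\dif t \lesssim K\,\eta_{\frac{1}{T}}$ is never established. After listing why the obvious routes fail (term-by-term absolute values, routing through $\tr{P(\rho_t-\omega)P(\rho_t-\omega)}$, bounding $|\hat w|$ by a gap density), you write that one must ``retain the positive-kernel structure \ldots grouping the summands by near-coincident energies'' --- but that is a description of a hope, not an argument. The four-index sum over near-resonant gap differences $(E_a-E_b)-(E_c-E_d)$ is precisely the structure of the Short--Farrelly-type bound in appendix~A, where the best one gets is a gap-counting factor $\mathcal{N}(\epsilon)$, not $\eta_{\frac{1}{T}}$; e.g.\ bounding the quadratic form by $\|M\|\,\|v\|^2$ with $M_{\alpha\beta}=\hat w(G_\alpha-G_\beta)$ gives an operator norm controlled only by the density of gaps, which for a generic spectrum does not become $O(K/\eta)$ until Heisenberg-scale times. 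Quantitatively, the strongest second-moment bound reachable with the paper's own tools is obtained by doubling: $(\tr{P\rho_t})^2=\tr{(P\otimes P)\,\rho_t^{\otimes 2}}$, operator Cauchy--Schwarz, and the purity bound applied to the doubled averaged state give $\braket{(\tr{P\rho_t})^2}_{L_T}\lesssim K\sqrt{\eta_{\frac{1}{T}}}$ (since the convolved distribution satisfies $\eta^{(2)}_\epsilon\le\eta_\epsilon$). Feeding that into your Cauchy--Schwarz in time yields only $\sqrt{K}\,\eta_{\frac{1}{T}}^{1/4}$, strictly weaker than the theorem. So the bound you need is strictly stronger than anything your listed ingredients provide, and it is not shown to be true.

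The paper avoids this problem entirely by never squaring the signal. It uses the triangle inequality $|g(t)|\le \tr{P\rho_t}+\tr{P\omega}$ (both terms nonnegative), then exploits \emph{linearity}: the Lorentzian average passes onto the state, $\braket{\tr{P\rho_t}}\lorsubs=\tr{P\,\omt}$ with $\omt=\braket{\rho_t}\lorsubs$. Cauchy--Schwarz is then applied in Hilbert--Schmidt space, not in time: $\tr{P\omt}\le\sqrt{\tr{P^2}\tr{\omt^2}}=\sqrt{K\,\tr{\omt^2}}$, and the only ``second moment'' ever needed is the purity of the averaged state, a \emph{two-index} sum $\sum_{jk}|\rho_{jk}|^2 e^{-2|E_j-E_k|T}$ that the layer-cake argument of appendix~B bounds by $2\eta_{\frac{1}{T}}/(1-\expo{-2})$. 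Your constant bookkeeping is also reverse-engineered: in the paper $\frac{5\pi}{4}$ comes from dominating the box window by the Lorentzian, $\sqrt{2/(1-\expo{-2})}$ from the $\delta=2$ choice in the purity bound (not from a Gaussian normalisation), and the additive $+1$ from the separate term $\tr{P\omega}\le\sqrt{K/\de}\le\sqrt{K\,\eta_{\frac{1}{T}}}$, not from a box--Gaussian mismatch. To repair your proof, replace Cauchy--Schwarz in time by the split $|g|\le\tr{P\rho_t}+\tr{P\omega}$ followed by operator Cauchy--Schwarz; as it stands, the central inequality is unproven and quite possibly false with the constants you need.
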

\begin{proof}
    Defining the Lorentzian average $\langle f(t) \rangle_{L_T} =
    \int_{-\infty}^{\infty} \frac{f(t)T}{T^2 +
      (t-T/2)^2}\frac{dt}{\pi}$, any positive function $f$ satisfies
    $\langle f \rangle_T \le \frac{5 \pi}{4} \langle f \rangle_{L_T}
    $~\footnote{Define \unexpanded{$\Theta_T(t) = \frac{1}{T}$} for
      \unexpanded{$t \in [0,T]$}, and \unexpanded{$0$} otherwise. Then
      \unexpanded{$\braket{f}_T = \int_{- \infty}^{\infty}
        \Theta_T(t)f(t) \dif t$}, and \unexpanded{$\Theta_T(t) \leq
        \frac{5}{4} \frac{T}{T^2 + (t-T/2)^2}$}. }.
    Then by use of the Cauchy-Schwarz inequality and the fact that
    $\tr{\omega^2} \le 1/\de$~\footnote{The equality is easy to check
      for non-degenerate Hamiltonians or pure initial states, while
      the inequality is necessary for degenerate Hamiltonians with a
      mixed initial state.}
    \begin{align}
        \label{eq:15}
        \braket{ D_P(\rho_t,\omega)}_T&= \braket{\left| \tr{P(\rho_t-\omega)} \right|}_T \notag\\
        &\leq \braket{ \tr{P\rho_t}}_T + \tr{P\omega} \notag\\
        &\leq \frac{5\pi}{4} \tr{P\omt} + \sqrt{\tr{\omega^2} \tr{P^2}} \notag\\
        &\leq \frac{5\pi}{4} \sqrt{\tr{\omt^2} \tr{P^2}} + \sqrt{\tr{\omega^2} \tr{P^2}} \notag\\
        &\leq \frac{5\pi}{4} \sqrt{K \tr{\omt^2}} + \sqrt{\frac{K}{d_{\text{eff}}}},
    \end{align}
    where $\omt = \langle \rho_t \rangle_{L_T}$ and $K=\rank{P}$.
    Appendix~B shows that
    \begin{equation}
        \label{eq:17}
        \tr{\omt^2} \leq \frac{2\eta_{\frac{1}{T} }}{1-\expo{-2}}.
    \end{equation}
    Using the fact that $\de^{-1} \leq p_{\max{}} \leq \eta_\epsilon,
    \forall \epsilon>0$, where $p_{\max{}}$ is the maximum occupation
    probability of any energy level, results in eq.~(\ref{eq:13}). The
    reason we may take $K = \min \left\{ \rank{P},\rank{(\id-P)}
    \right\}$ is that $D_P(\rho_t,\omega) = D_{\id - P}(\rho_t,\omega)
    \,\,\forall t \in \Real$.
\end{proof}
The requirement of ``large $\de$'' mentioned in the introduction is a
consequence of $\de^{-1} \leq \eta_\epsilon$, since $\eta_\frac{1}{T}$
cannot converge to a small value if $\de$ is small.

Note that the quantity $\tr{\omt^2}$, corresponding to the purity of
the time-averaged state, is at the core of the equilibration process,
dictating, for any given system, an upper bound on the timescale of
equilibration (the reciprocal of this quantity acts like a
time-dependent effective dimension, growing from $1$ to $\de$ as $T$
increases). The right hand side of eq.~(\ref{eq:17}) displays a bound
on the purity which is easier to calculate than the purity itself (in
fact, it is trivial if one knows the spectrum and the state) and whose
tightness is discussed below.

This theorem proves that the measurement of a \rank{1} projector
equilibrates as soon as the energy interval $1/T$ is too small to
contain a significant portion of the probabilities, which happens
roughly when it is small compared to $\sigma_E$ (which is a very short
time scale). Conversely, a \rank{K} projector requires that the
probabilities be $K$ times smaller. For instance, if $\eta_\frac{1}{T}
\sim \frac{1}{\sigma_E T}$, as we argue below, the time scale of
equilibration is at most $\sim \frac{K}{\sigma_E} $.

\subsection{Estimating $\eta$}
\label{sec:esimating-eta-2}

We focus now our attention on equation~(\ref{eq:17}), in order to
compare how well $\eta_{\frac{1}{T} }$ upper-bounds the purity, and to
illustrate how easy it is to estimate $\eta$.

% We argue that $\eta_{\frac{1}{T} } \sim \frac{1}{\sigma_E T} $ and
% show that it holds for an example.

Given a dense enough energy spectrum, we can approximate the
probability distribution of the initial state by a continuous function
$p(E)$ for which the maximum value is roughly
\begin{equation}
    \label{eq:18}
    \max_E{p(E)} \sim \frac{a}{\sigma_E} ,
\end{equation}
where $a$ is some constant which depends on the shape of the
distribution. Since $\eta_\epsilon$ can always be upper bounded by
$\epsilon \max_E{p(E)}$, we have
\begin{equation}
    \label{eq:19}
    \eta_{\frac{1}{T} } \le \frac{a}{\sigma_E T} ,
\end{equation}
as long as $T$ is not large enough that the $\frac{1}{T} $ window only
contains a few energy levels.
% (a trivial restriction, given that equilibration will happen while
% $T$ is still small).
In appendix~B1 we show that the above
estimation is correct for the case of a Gaussian distribution for the
energy probabilities, with $a \approx 0.40$ in this case.

\section{Typical Measurements}
\label{sec:average-projectors}
Here we prove two statements regarding typical two-outcome
measurements composed of a projector of \emph{any} rank, applied to
any initial state and any Hamiltonian.
\begin{theorem}[Typical two-outcome measurements are already
    equilibrated for any initial state]
    \label{thr:3}
    Take the \rank{K} projector $P_U$ defined as the unitary
    transformation from an energy basis projector
    \begin{equation}
        \label{eq:236008}
        P_U = UPU^\dagger = \sum_{n = 1}^{K} U \ket{n}\bra{n} U^\dagger,
    \end{equation}
    with $U:\Hil \rightarrow \hil$ unitary and $\ket{n}$ being energy
    eigenstates.

    The distinguishability between $\rho_t$ and $\omega$ according to
    $P_U$ (and its complement) averaged over all unitaries is
    \begin{equation}
        \label{eq:33}
        \braket{D_{P_U}(\rho_t,\omega)}_U
        \leq \sqrt{ \frac{K}{d^2} \frac{d-K}{d+1}}
        \leq \frac{1}{2\sqrt{d+1}}.
    \end{equation}
\end{theorem}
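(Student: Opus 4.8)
The plan is to turn the Haar average of an absolute value into a Haar average of a square, evaluate the latter with the standard second-moment (twirl) formula, and finally bound the surviving Hilbert--Schmidt norm $\tr{(\rho_t-\omega)^2}$ using a property of the equilibrium state. Because the measurement has two outcomes, $D_{P_U}(\rho_t,\omega)=\left|\tr{P_U(\rho_t-\omega)}\right|$. Writing $\Delta=\rho_t-\omega$, which is Hermitian and traceless, and applying Jensen's inequality to the concave square root gives
\begin{equation}
  \braket{D_{P_U}(\rho_t,\omega)}_U = \braket{\left| \tr{P_U(\rho_t-\omega)} \right|}_U \le \sqrt{\braket{\left(\tr{P_U\Delta}\right)^{2}}_U},
\end{equation}
so everything reduces to the single second moment $\braket{\left(\tr{P_U\Delta}\right)^{2}}_U$.

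To evaluate it I would lift the product of traces to the doubled space $\Hil\otimes\Hil$ via $\left(\tr{P_U\Delta}\right)^{2}=\tr{(P_U\otimes P_U)(\Delta\otimes\Delta)}$ and use $P_U\otimes P_U=(U\otimes U)(P\otimes P)(U^\dagger\otimes U^\dagger)$. By Schur--Weyl duality the Haar twirl of $P\otimes P$ lies in $\spn\{\id\otimes\id,\mathbb{F}\}$, where $\mathbb{F}$ is the swap operator, so $\braket{P_U\otimes P_U}_U=c_1\,\id\otimes\id+c_2\,\mathbb{F}$. I would fix $c_1,c_2$ by taking the trace of this identity against $\id\otimes\id$ and against $\mathbb{F}$, using $\tr{P}=\tr{P^2}=K$, $\tr{\id\otimes\id}=d^2$ and $\tr{\mathbb{F}}=d$; solving the resulting $2\times2$ linear system yields $c_2=\frac{K(d-K)}{d(d^2-1)}$. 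Contracting with $\Delta\otimes\Delta$, the identity term contributes $c_1\left(\tr{\Delta}\right)^{2}=0$ because $\Delta$ is traceless, while the swap term gives $\tr{\mathbb{F}(\Delta\otimes\Delta)}=\tr{\Delta^2}$, leaving
\begin{equation}
  \braket{\left(\tr{P_U\Delta}\right)^{2}}_U = \frac{K(d-K)}{d(d^2-1)}\,\tr{\Delta^2}.
\end{equation}

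The last ingredient is a bound on $\tr{\Delta^2}=\tr{\rho_t^2}-2\tr{\rho_t\omega}+\tr{\omega^2}$. The clean observation I would exploit is that $\mathcal{Q}_n\rho_t\mathcal{Q}_n=\mathcal{Q}_n\rho\mathcal{Q}_n$ for every energy projector $\mathcal{Q}_n$, since the factors $\expo{\mp iE_n t}$ cancel within each eigenspace; summing over $n$ gives $\tr{\rho_t\omega}=\sum_n\tr{\mathcal{Q}_n\rho\mathcal{Q}_n\rho}=\tr{\omega^2}$. Hence $\tr{\Delta^2}=\tr{\rho^2}-\tr{\omega^2}$ (using $\tr{\rho_t^2}=\tr{\rho^2}$), and since $\tr{\rho^2}\le 1$ while any $d$-dimensional state obeys $\tr{\omega^2}\ge 1/d$, we get $\tr{\Delta^2}\le(d-1)/d$. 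Substituting this into the second moment and simplifying $d^2-1=(d-1)(d+1)$ produces the first bound $\sqrt{\frac{K}{d^2}\frac{d-K}{d+1}}$, and the second inequality follows immediately from the elementary estimate $K(d-K)\le d^2/4$.

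The main obstacle is getting the twirl right: one must correctly identify the commutant of $U\otimes U$ as $\spn\{\id\otimes\id,\mathbb{F}\}$, solve cleanly for $c_2$, and notice that the traceless property of $\Delta$ eliminates the identity contribution --- a slip there changes the $d$-scaling of the whole estimate. The other point that needs care is the identity $\tr{\rho_t\omega}=\tr{\omega^2}$; it is precisely what collapses $\tr{\Delta^2}$ to a $t$-independent difference of purities, and without it no bound this tight and uniform in $t$ would be available.
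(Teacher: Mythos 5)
Your proposal is correct and follows essentially the same route as the paper: Jensen's inequality, an exact second-moment computation via the Haar twirl of $P_U\otimes P_U$, the identity $\tr{\rho_t\omega}=\tr{\omega^2}$, and the purity bound $\tr{\omega^2}\geq 1/d$, with your coefficient $c_2=\frac{K(d-K)}{d(d^2-1)}$ agreeing exactly with the paper's $\frac{K}{d}\frac{d-K}{d^2-1}$. The only cosmetic differences are that you expand the twirl in the basis $\{\id\otimes\id,\mathbb{F}\}$ rather than the symmetric/antisymmetric projectors $\Pi_S,\Pi_A$ (a trivial change of basis in the same two-dimensional commutant, slightly streamlined by invoking tracelessness of $\Delta$ up front), and that you spell out the proof of $\tr{\rho_t\omega}=\tr{\omega^2}$, which the paper merely asserts.
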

\begin{proof}
    The only necessary inequality is the first step, Jensen's
    inequality\cite{rudin1987real},
    \begin{equation}
        \label{eq:139501}
        \braket{D_{P_U}(\rho_t,\omega)}_U \leq \sqrt{\braket{D_{P_U}(\rho_t,\omega)^2}_U}.
    \end{equation}
    In appendix~C1 we show that the average
    of the squared distinguishability can be exactly calculated to be
    \begin{equation}
        \label{eq:30}
        \braket{D_{P_U}(\rho_t,\omega)^2}_U
        = \frac{K}{d} \frac{d-K}{d^2-1}\tr{\rho_t^2-\omega^2}.
    \end{equation}
    Then, the fact that $\tr{\omega^2} \geq 1/d$~\footnote{It is easy to
      see the trace is minimized when $\omega = \id/d$.} implies that
    $\tr{\rho_t^2-\omega^2}\leq(d-1)/d$ and leads to the first inequality
    of eq.~(\ref{eq:33}). The second inequality is obtained by setting $K
    = d/2$, which maximizes the expression.
\end{proof}
This average result is relevant because $D_{P_U}(\rho_t,\omega)$ is a
positive definite quantity. Thus, stating that its average is small
necessarily implies that $D_{P_U}(\rho_t,\omega)$ is small for most
$P_U$ (in other words, it is strongly concentrated close to zero).

This result, however, does not make any statements about time scales,
or the dynamics of equilibration. It is more relevant to study
measurements which start out of equilibrium, and ask how fast they
approach it. For this reason, Theorem~\ref{thr:5} visits again the
average distinguishability, but constrains the projector to contain
$\rho(0) = \rho_0 = \ket{\Psi}\bra{\Psi}$ as one of its terms (note
that for this theorem, we restrict the initial state to be pure). For
that, we divide the Hilbert space between the span of the initial
state and everything else $\Hil = \Hil'\oplus \rho_0$, where
$\dim{\Hil}=d$ and $\dim{\Hil'}=d-1$.

\begin{theorem}[Typical two-outcome measurements with a definite
    initial value equilibrate fast for any pure initial state with
    high $\de$]
    \label{thr:5}
    Consider the projector given by
    \begin{align}
        \label{eq:287712}
        \Pi_U = \rho_0 + P_U, \qquad \quad P_U = U PU^\dagger,
    \end{align}
    where $\rho_0$ is the initial (pure) state, $U$ is a partial unitary
    with $UU^\dagger = U^\dagger U = \id_{\Hil'}$, $P$ is any
    rank-$(K-1)$ projector with support on $\hil'$.

    The distinguishability between $\rho_t$ and $\omega$ according to
    $\Pi_U$ (and its complement) averaged over all unitaries on $\hil'$
    is
    \begin{equation}
        \label{eq:34}
        \braket{D_{\Pi_U}(\rho_t,\omega)}_U
        \leq D_{\rho_0}(\rho_t,\omega) + \frac{1}{2 \sqrt{d-1}},
    \end{equation}
    where we have (from Theorem~\ref{thr:4}) that
    \begin{equation}
        \label{eq:38}
        \braket{\braket{D_{\Pi_U}(\rho_t,\omega)}_U}_T
        \leq c \sqrt{\eta_{\frac{1}{T} }}
        + \frac{1}{2 \sqrt{d-1}}
    \end{equation}
    decays very fast, with $c \approx 6$.
\end{theorem}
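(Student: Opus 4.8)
The plan is to reduce the two-term projector $\Pi_U = \rho_0 + P_U$ to an exact Haar average over the $(d-1)$-dimensional subspace $\hil'$, mirroring the computation behind eq.~(\ref{eq:30}), and only at the very end to split off the $U$-independent contribution of $\rho_0$. Writing $\Delta = \rho_t - \omega$ and $Q = \id - \rho_0$ for the projector onto $\hil'$, I first record three facts: $D_{\Pi_U}(\rho_t,\omega) = \left| \tr{\rho_0 \Delta} + \tr{P_U \Delta} \right|$; the trace $\tr{\rho_0\Delta}$ is independent of $U$ and equals $\pm D_{\rho_0}(\rho_t,\omega)$; and $\tr{P_U\Delta} = \tr{P_U(Q\Delta Q)}$ since $P_U = UPU^\dagger$ is supported on $\hil'$. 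By Jensen's inequality, exactly as in eq.~(\ref{eq:139501}), it then suffices to control the second moment $\braket{D_{\Pi_U}(\rho_t,\omega)^2}_U$.

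The core computation is the Haar integral of $\braket{(\tr{\rho_0\Delta} + \tr{P_U\Delta})^2}_U$ over unitaries on $\hil'$. Expanding the square yields three pieces: the $U$-independent term $\tr{\rho_0\Delta}^2$; a cross term governed by the first moment $\braket{P_U}_U = \tfrac{K-1}{d-1}\,Q$, which (using $\tr{Q\Delta} = -\tr{\rho_0\Delta}$) equals $-2\tfrac{K-1}{d-1}\tr{\rho_0\Delta}^2$; and the pure second moment $\braket{(\tr{P_U(Q\Delta Q)})^2}_U$, evaluated by the same second-moment (Weingarten) identity used for eq.~(\ref{eq:30}), now on dimension $d-1$ with a rank-$(K-1)$ projector. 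That identity produces a $(\tr{Q\Delta Q})^2 = \tr{\rho_0\Delta}^2$ term and a $\tr{(Q\Delta Q)^2}$ term. The delicate step is checking that the first of these, together with the $U$-independent and cross terms, collapses to $\braket{D_{\Pi_U}(\rho_t,\omega)^2}_U = C\,D_{\rho_0}(\rho_t,\omega)^2 + B\,\tr{(Q\Delta Q)^2}$ with $C = (d-K)[(d-1)(d-K)-1]/[(d-1)((d-1)^2-1)]$ and $B = (K-1)(d-K)/[(d-1)((d-1)^2-1)]$.

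To finish eq.~(\ref{eq:34}) I would then bound $C \le 1$ (it is increasing in $d-K$ and equals $1$ at $K=1$), use the contraction estimate $\tr{(Q\Delta Q)^2} \le \tr{\Delta^2} = \tr{\rho_t^2 - \omega^2} \le 1$ (purity of the pure state $\rho_t$ together with $\tr{\omega^2}\ge 1/d$, as in Theorem~\ref{thr:3}), and maximize $B$ over $K$, which peaks near $K-1 = (d-1)/2$ at $B \le 1/[4(d-1)]$. This gives $\braket{D_{\Pi_U}(\rho_t,\omega)^2}_U \le D_{\rho_0}(\rho_t,\omega)^2 + 1/[4(d-1)]$, and applying $\sqrt{a^2+b^2}\le a+b$ with $a = D_{\rho_0}(\rho_t,\omega)$ and $b = 1/[2\sqrt{d-1}]$ turns Jensen's square-root bound into the additive form of eq.~(\ref{eq:34}).

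Finally, eq.~(\ref{eq:38}) follows by time-averaging eq.~(\ref{eq:34}), interchanging the $U$- and $T$-averages (Fubini, both integrands positive), and applying Theorem~\ref{thr:4} to the rank-one projector $\rho_0$: with $K=1$ this gives $\braket{D_{\rho_0}(\rho_t,\omega)}_T \le c\sqrt{\eta_{1/T}}$, hence the claimed bound. I expect the main obstacle to be the exact second-moment evaluation on $\hil'$, and specifically verifying the cancellation in the cross term that yields the clean coefficient $C$; the remaining estimates are routine. The subtlety to watch is that $Q\Delta Q$ is \emph{not} traceless, so (unlike in Theorem~\ref{thr:3}, where $\tr{\Delta}=0$ killed it) the trace-squared part of the second-moment identity genuinely contributes and must be tracked, since it is precisely what combines with $\tr{\rho_0\Delta}^2$ to reconstruct the $D_{\rho_0}$ term.
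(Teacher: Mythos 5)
Your proposal is correct and is essentially the paper's own proof (appendix~C2) reorganized: Jensen's inequality, the exact Haar second moment over unitaries on $\hil'$, the cross term via $\braket{P_U}_U = \frac{K'}{d'}(\id-\rho_0)$, then $\sqrt{f^2+\tfrac{1}{4d'}}\le |f|+\tfrac{1}{2\sqrt{d'}}$ and Theorem~\ref{thr:4} at rank one for the time average (writing $d'=d-1$, $K'=K-1$, $f=\tr{\rho_0(\rho_t-\omega)}$). Your exact coefficients agree with the paper's penultimate expression: indeed $1-2\tfrac{K'}{d'}+\tfrac{K'}{d'}\tfrac{K'd'-1}{d'^2-1}=\tfrac{(d'-K')\left(d'(d'-K')-1\right)}{d'(d'^2-1)}$, which is your $C$, and your $B=\tfrac{K'(d'-K')}{d'(d'^2-1)}$ is the coefficient the paper attaches to $\tr{(Q\Delta Q)^2}$; the paper simply carries out the same computation in the order $\braket{\Pi_U\otimes\Pi_U}_U$ and bounds $\tr{\Pi_S A(t)}$, $\tr{\Pi_A A(t)}$ along the way. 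Your closing remark that the non-tracelessness of $Q\Delta Q$ (namely $\tr{Q\Delta Q}=-f$) is precisely what reconstructs the $D_{\rho_0}^2$ term is exactly the bookkeeping point the paper's proof relies on.

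One small numerical slip to repair: the standalone claim $B\le \frac{1}{4(d-1)}$ is false as stated, since maximizing over $K'$ at $K'=d'/2$ gives $B_{\max}=\frac{d'}{4(d'^2-1)}$, which slightly exceeds $\frac{1}{4d'}$. The fix is already in your parenthetical: do not round $\tr{(Q\Delta Q)^2}\le \tr{(\rho_t-\omega)^2}$ up to $1$, but keep $\tr{(\rho_t-\omega)^2}=\tr{\rho_t^2-\omega^2}\le 1-\frac{1}{d}=\frac{d'}{d'+1}$ (this is how the paper uses $\de\le d$). Then $B\,\tr{(Q\Delta Q)^2}\le \frac{d'^2}{4(d'-1)(d'+1)^2}\le \frac{1}{4d'}$ for $d'\ge 2$, which is equivalent to the inequality $d'^3\le (d'+1)^2(d'-1)$ invoked in the paper, and the remainder of your argument goes through verbatim.
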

% \begin{proof}
%     See appendix~\ref{sec:average-proj-cont-rho}.
% \end{proof}
The details of the proof can be found in appendix~C2. This result is enough to state that most
two-outcome measurements (of any rank) containing the initial state
equilibrate essentially as fast as the measurement of the rank-$1$
projector consisting of \emph{only} the initial state.
%%%%%%%%%%%%%%%%%%%%%%%%%%%
% If we focus on the details of the calculations we can also draw
% other conclusions. For instance, the penultimate line of
% eq.~\eqref{eq:471969} shows that increasing the rank of $P$ actually
% makes $\Pi_U$ worse at identifying $\rho_0$.

To show that this class of observables is typically out of equilibrium
initially (and thus equilibrates in a non-trivial way), we show in
appendix~C3 that the average initial distinguishability is given by
\begin{equation}
    \braket{D_{\Pi_U}(\rho_0,\omega)}_U \geq \left(1- \frac{K-1}{d-1} \right) \left(1- \frac{1}{\de} \right),
\end{equation}
and is therefore significantly above zero so long as the projector
does not cover almost the entire space.

We now extend these results to multi-outcome measurements.

\begin{corollary}[$N$-outcome generalization of Theorem~\ref{thr:3}]
    \label{co:1}
    Given $N \ll d$, the typical $N$-outcome measurement is already
    equilibrated. Describing the measurement by the POVM
    $\mathcal{M}_U = \set{U^\dagger P_i U}_{i=1,N}$, and using the
    result from Theorem~\ref{thr:3}, it is easy to see that
    \begin{align}
        \label{eq:44}
        \braket{D_{\mathcal{M}_U}(\rho_t,\omega)}_U
        &= \frac{1}{2} \sum_{j=1}^{N} \braket{D_{U^\dagger P_j U}(\rho_t,\omega)}_U \notag\\
        &\leq \frac{1}{2} \sum_{j=1}^{N} \sqrt{ \frac{K_j}{d^2} \frac{d-K_j}{d+1}} \notag\\
        &\leq \frac{1}{2} \sqrt{\frac{N}{d+1} }
    \end{align}
    where $K_j = \rank{P_j}$, and the second line is maximal for $K_j =
    d/N$.
\end{corollary}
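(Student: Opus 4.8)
The plan is to reduce the $N$-outcome claim to the two-outcome result of Theorem~\ref{thr:3}, exploiting the fact that the distinguishability of eq.~(\ref{eq:8}) is by construction a sum over outcomes. First I would note that each summand is itself a two-outcome distinguishability: writing $P_j^U = U^\dagger P_j U$, we have $|\tr{P_j^U \rho_t} - \tr{P_j^U \omega}| = D_{P_j^U}(\rho_t,\omega)$, so that $D_{\mathcal{M}_U}(\rho_t,\omega) = \tfrac12 \sum_{j=1}^N D_{P_j^U}(\rho_t,\omega)$. Averaging over the Haar measure and using linearity of the integral then pushes the average through the sum, which is exactly the first line of eq.~(\ref{eq:44}).

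Next I would apply Theorem~\ref{thr:3} to each term. Each $P_j^U$ is a rank-$K_j$ projector obtained by conjugating a fixed reference projector by $U$, and since the Haar measure is invariant under $U \mapsto U^\dagger$, the average $\braket{D_{P_j^U}(\rho_t,\omega)}_U$ coincides with the quantity bounded in Theorem~\ref{thr:3} with $K = K_j$. This gives the second line. The point requiring care here is that the same $U$ appears in every outcome, so the projectors $P_j^U$ are mutually correlated; however, because we only need the average of the sum, linearity lets each term be controlled through its own marginal law (a Haar-random rank-$K_j$ projector), and the joint correlations play no role.

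The last step is a short constrained optimization. Since $\sum_j P_j = \id$ forces $\sum_{j=1}^N K_j = d$, I would maximize $\sum_j \sqrt{K_j(d-K_j)}$ over the simplex $\{K_j \ge 0,\ \sum_j K_j = d\}$. The map $K \mapsto \sqrt{K(d-K)}$ is concave, so by Jensen the maximum sits at the symmetric point $K_j = d/N$, where each term equals $\tfrac{d}{N}\sqrt{N-1}$ and the sum equals $d\sqrt{N-1}$. Substituting back yields $\braket{D_{\mathcal{M}_U}(\rho_t,\omega)}_U \le \tfrac12 \sqrt{(N-1)/(d+1)} \le \tfrac12\sqrt{N/(d+1)}$, reproducing the stated bound, which is manifestly small when $N \ll d$. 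A cruder route avoiding the extremization also works: $\sqrt{K_j(d-K_j)} \le \sqrt{d\,K_j}$ followed by Cauchy--Schwarz, $\sum_j \sqrt{K_j} \le \sqrt{N \sum_j K_j} = \sqrt{Nd}$, lands on $\tfrac12\sqrt{N/(d+1)}$ directly.

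I do not expect a genuine obstacle: the corollary is a direct consequence of Theorem~\ref{thr:3} once the additive structure of $D_{\mathcal{M}}$ is used. The only delicate point is the term-by-term application of the two-outcome bound in the presence of a shared unitary, i.e. confirming that only the marginal distribution of each $P_j^U$ enters the average of the sum; everything past that is the elementary maximization of a concave objective on a simplex.
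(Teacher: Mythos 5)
Your proposal is correct and takes essentially the same route as the paper, whose entire proof is the displayed chain in eq.~(\ref{eq:44}): decompose $D_{\mathcal{M}_U}$ into a sum of two-outcome distinguishabilities, apply Theorem~\ref{thr:3} termwise by linearity of the Haar average, and maximize the rank-dependent bound at $K_j = d/N$ subject to $\sum_j K_j = d$. Your explicit treatment of the details the paper leaves implicit --- Haar invariance under $U \mapsto U^\dagger$ (reconciling $U^\dagger P_j U$ with the $U P U^\dagger$ of Theorem~\ref{thr:3}), the observation that only the marginal law of each correlated $P_j^U$ enters the averaged sum, and the concavity/Jensen argument for the extremization --- is accurate, and your cruder Cauchy--Schwarz variant of the final step is also valid.
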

\begin{corollary}[$N$-outcome generalization of Theorem~\ref{thr:5}]
    \label{co:2}
    Given $N \ll d$, typical out-of-equilibrium $N$-outcome
    measurements equilibrate fast for any pure initial state with high
    $\de$. Define the POVM $\cm_U^{\rho_0} = \set{\rho_0 + U^\dagger
      P_1 U,\,U^\dagger P_2 U,\,\ldots,\,U^\dagger P_N U}$, with
    $\rho_0 + \sum U^\dagger P_n U= \id$. In appendix~C4, we show that
    \begin{equation}
        \label{eq:35}
        \braket{D_{\cm_U^{\rho_0}}(\rho_t,\omega)}_U
        \leq D_{\rho_0}(\rho_t,\omega) + \frac{1}{2} \sqrt{\frac{N}{d-1} }.
    \end{equation}
\end{corollary}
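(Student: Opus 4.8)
The plan is to treat $\cm_U^{\rho_0}$ as an ordinary $N$-outcome measurement on $\Hil'$ to which the initial state has been grafted onto the first outcome, and to peel that extra contribution off before averaging. First I would expand the distinguishability using its definition,
\begin{equation*}
    D_{\cm_U^{\rho_0}}(\rho_t,\omega) = \frac{1}{2}\left|\tr{(\rho_t-\omega)(\rho_0 + U^\dagger P_1 U)}\right| + \frac{1}{2}\sum_{j=2}^{N}\left|\tr{(\rho_t-\omega)U^\dagger P_j U}\right|,
\end{equation*}
and apply the triangle inequality to the first term, splitting off $\frac{1}{2}\left|\tr{(\rho_t-\omega)\rho_0}\right| = \frac{1}{2}D_{\rho_0}(\rho_t,\omega)$, which I then bound (loosely) by $D_{\rho_0}(\rho_t,\omega)$ so as to match the form of Theorem~\ref{thr:5}. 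What remains is exactly $\frac{1}{2}\sum_{j=1}^{N} D_{U^\dagger P_j U}(\rho_t,\omega)$, the distinguishability of the genuine $N$-outcome measurement $\set{U^\dagger P_j U}{j=1,\dots,N}$ whose elements are orthogonal projectors partitioning $\Hil'$, since $\sum_j U^\dagger P_j U = \id - \rho_0$ and hence $\sum_j \rank{P_j} = d-1$.

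Second, I would average this residual term over the partial unitaries $U$ on $\Hil'$ and handle it exactly as in Corollary~\ref{co:1}: bound each $\braket{D_{U^\dagger P_j U}(\rho_t,\omega)}_U$ by $\sqrt{\braket{D_{U^\dagger P_j U}(\rho_t,\omega)^2}_U}$ with Jensen's inequality, evaluate the squared average through the two-copy Haar-integration identity, sum the $N$ contributions, and maximize over the ranks $K_j$ subject to $\sum_j K_j = d-1$ (the optimum being $K_j = (d-1)/N$). This is the same concavity argument that collapses the per-outcome bounds of eq.~(\ref{eq:44}) into a single square-root factor, and here it should deliver $\frac{1}{2}\sqrt{N/(d-1)}$. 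Adding back the peeled-off $D_{\rho_0}$ term then yields eq.~(\ref{eq:35}).

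The main obstacle is the Haar integration in the second step. Because $U$ is only a partial unitary on the $(d-1)$-dimensional subspace $\Hil'$ while $\rho_t$ and $\omega$ retain support along $\rho_0$, the squared-average is \emph{not} simply eq.~(\ref{eq:30}) with $d \mapsto d-1$: one must redo the restricted two-copy average over $\Hil'$, carefully tracking how the $\rho_0$-components of $\rho_t$ and $\omega$ enter, and it is precisely this subspace computation (already required for Theorem~\ref{thr:5} in appendix~C2) that produces the clean $d-1$ in the denominator rather than $d$ or $d+1$. Once that identity is secured, the remaining ingredients — the triangle inequality, Jensen, and the rank optimization — are routine and parallel the proofs of Theorem~\ref{thr:5} and Corollary~\ref{co:1}. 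Finally, to obtain the fast time-scale claim I would time-average eq.~(\ref{eq:35}) and invoke Theorem~\ref{thr:4} on the rank-$1$ projector $\rho_0$ (so that $K=1$ and $\braket{D_{\rho_0}(\rho_t,\omega)}_T \le c\sqrt{\eta_{1/T}}$ decays rapidly), exactly as in the passage from eq.~(\ref{eq:34}) to eq.~(\ref{eq:38}).
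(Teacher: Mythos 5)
Your overall architecture tracks the paper's appendix~C4 closely (separate the first outcome, apply Jensen per outcome, redo the restricted two-copy Haar average on $\Hil'$, optimize the ranks subject to $\sum_j K_j = d-1$), but step~2 rests on a false intermediate claim. After peeling off $\frac{1}{2}D_{\rho_0}$, you assert that the residual $\frac{1}{2}\sum_{j=1}^{N}\braket{D_{U^\dagger P_j U}(\rho_t,\omega)}_U$ is at most $\frac{1}{2}\sqrt{N/(d-1)}$. This cannot hold: since $\sum_j U^\dagger P_j U = \id - \rho_0$ for \emph{every} $U$, the triangle inequality gives, pointwise in $U$,
\begin{equation*}
    \frac{1}{2}\sum_{j=1}^{N}\left|\tr{U^\dagger P_j U\,(\rho_t-\omega)}\right|
    \;\geq\; \frac{1}{2}\left|\tr{(\id-\rho_0)(\rho_t-\omega)}\right|
    \;=\; \frac{1}{2}D_{\rho_0}(\rho_t,\omega),
\end{equation*}
which at early times is of order $1$ and vastly exceeds $\frac{1}{2}\sqrt{N/(d-1)}$; no amount of averaging can push the residual below it. The mechanism is precisely the subtlety you flag but then under-resolve: the restricted average does not merely change the denominator to $d-1$. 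Because the compression $(\id-\rho_0)(\rho_t-\omega)(\id-\rho_0)$ has nonzero trace $-f(t)$ on $\Hil'$, where $f(t)=\tr{\rho_0(\rho_t-\omega)}$, the paper's eq.~(40) shows that \emph{each} outcome's squared average carries an $f(t)^2$-proportional term, $\braket{D_{U^\dagger P_jU}(\rho_t,\omega)^2}_U \le f(t)^2\,\frac{K_j}{d'}\frac{K_j d'-1}{d'^2-1} + \frac{K_j}{d'+1}\frac{d'-K_j}{d'^2-1}$ with $d'=d-1$, not only the $j=1$ outcome containing $\rho_0$.

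The correct bookkeeping (paper's eqs.~(45) and~(45-2)) keeps these $f(t)^2$ pieces inside the square roots, optimizes the ranks at $K_j=d'/N$, and uses $\sqrt{a+b}\le\sqrt{a}+\sqrt{b}$ to get the residual bounded by $\frac{1}{2}\left|f(t)\right| + \frac{1}{2}\sqrt{N/d'}$; added to the $\frac{1}{2}D_{\rho_0}$ from the first outcome — which you must therefore \emph{not} pre-loosen to $D_{\rho_0}$ — this yields exactly $D_{\rho_0}(\rho_t,\omega) + \frac{1}{2}\sqrt{N/(d-1)}$. As written, your argument lands on the correct total only by accident: the factor-of-two slack you discarded ``to match the form of Theorem~5'' happens to equal the $\frac{1}{2}\left|f(t)\right|$ that the residual genuinely costs, but your proof never establishes this and the step claiming the residual is $O(\sqrt{N/(d-1)})$ is simply wrong. (Your initial pointwise triangle-inequality split $D_{\rho_0+U^\dagger P_1 U}\le D_{\rho_0}+D_{U^\dagger P_1 U}$ is a legitimate, slightly simpler alternative to the paper's route through $\braket{D_{\rho_0+P_{1U}}^2}_U \le f(t)^2 + \braket{D_{P_{1U}}^2}_U$, and your final time-averaging step via Theorem~4 applied to the rank-$1$ projector $\rho_0$ matches the paper.)
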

This means most measurements are already equilibrated even for a large
number of outcomes as long as $N \ll d$, a physically reasonable
assumption for systems composed of many particles given that the
dimension $d$ grows exponentially with the number particles.
Furthermore, for any $N \ll d$, most measurements with a definite
initial value (which are typically out of equilibrium initially) still
equilibrate essentially as fast as a \rank{1} projector.

\section{Slow Equilibration}
\label{sec:slow-equil-res}

% This result complements the previous two sections, for it explicitly
% constructs a high-rank projector $P$ for which the average
% distinguishability $\braket{D_P(\rho_t,\omega)}_T$ does not reduce
% fast. For clarity, we state the result colloquially here and leave
% the theorem and definitions for appendix~\ref{sec:it-stays-inside}.
%
% For any pure system with high effective dimension, it is always
% possible to define a measurement for which the equilibration time is
% tremendously long (typically greater than the age of the universe).
% We do that by defining a subspace such that the wave function
% \emph{(i)} stays predominantly inside it for a long time, but
% nevertheless \emph{(ii)} equilibrates predominantly out of it for
% longer times. In the appendix~\ref{sec:it-stays-inside}, we provide
% an example where this time is $\sim \frac{\de}{1000\sigma_E}$ and
% show this can easily be longer than the age of the universe.
% Meanwhile, for the same system, the typical time scale is on the
% order of femptoseconds.

%%%%%%%%%%%%%%%%%%%%%%%%%%%%%%%%%%%%%%%%%%%%%

This result complements the previous sections by showing that fast
equilibration is not always the case. We find that for any pure system
with high effective dimension it is always possible to define a
measurement for which the equilibration time is tremendously long.
% (typically greater than the age of the universe).

We do that by considering the projector $P_\hk$ onto the subspace
$\hk$ defined by
\begin{equation}
    \label{eq:10}
    \hk = \spn\Set{\ket{\psi(j \tau)}}{j=0,\ldots,K-1},
\end{equation}
with $\tau = 2 \epsilon / \sigma_E$. We prove in appendix~D that, for any $\epsilon$,
% the distinguishability satisfies the following two equations
\begin{equation}
    \label{eq:9}
    \!\!D_{P_\hk}\!(\rho_t,\omega) \geq 1 - \epsilon^2 - \sqrt{\frac{K}{\de} },
    \quad \forall t\! \in \!\left[ 0,K \tau - \frac{\epsilon}{\sigma_E} \right],
\end{equation}
and
\begin{equation}
    \label{eq:20}
    \braket{D_{P_\hk}(\rho_t,\omega) }_{T \rightarrow \infty} \leq 2 \sqrt{\frac{K}{{\de}}} \ll 1,
\end{equation}
showing that the distinguishability is above some constant for a time
that can be very long, but still equilibrates eventually.

The construction simply takes the subspace comprised of $K$ sequential
``snapshots'' of the wave function, and makes sure that the time step
between these snapshots is small enough such that the wave-function
does not move out during the intermediate times. The \emph{``time is
  long''} statement holds because the necessity to take small steps is
nothing compared to the very large number of steps we are allowed to
include ($K\ll \de$). In the appendix~D we provide an example where
this time is $\sim \frac{\de}{1000\sigma_E}$, which can easily be
longer than the age of the universe.

\section{Discussion}
\label{sec:discussion}
\label{sec:harm-osc}
\begin{figure}
    \centering
    \includegraphics[width=0.5\textwidth]{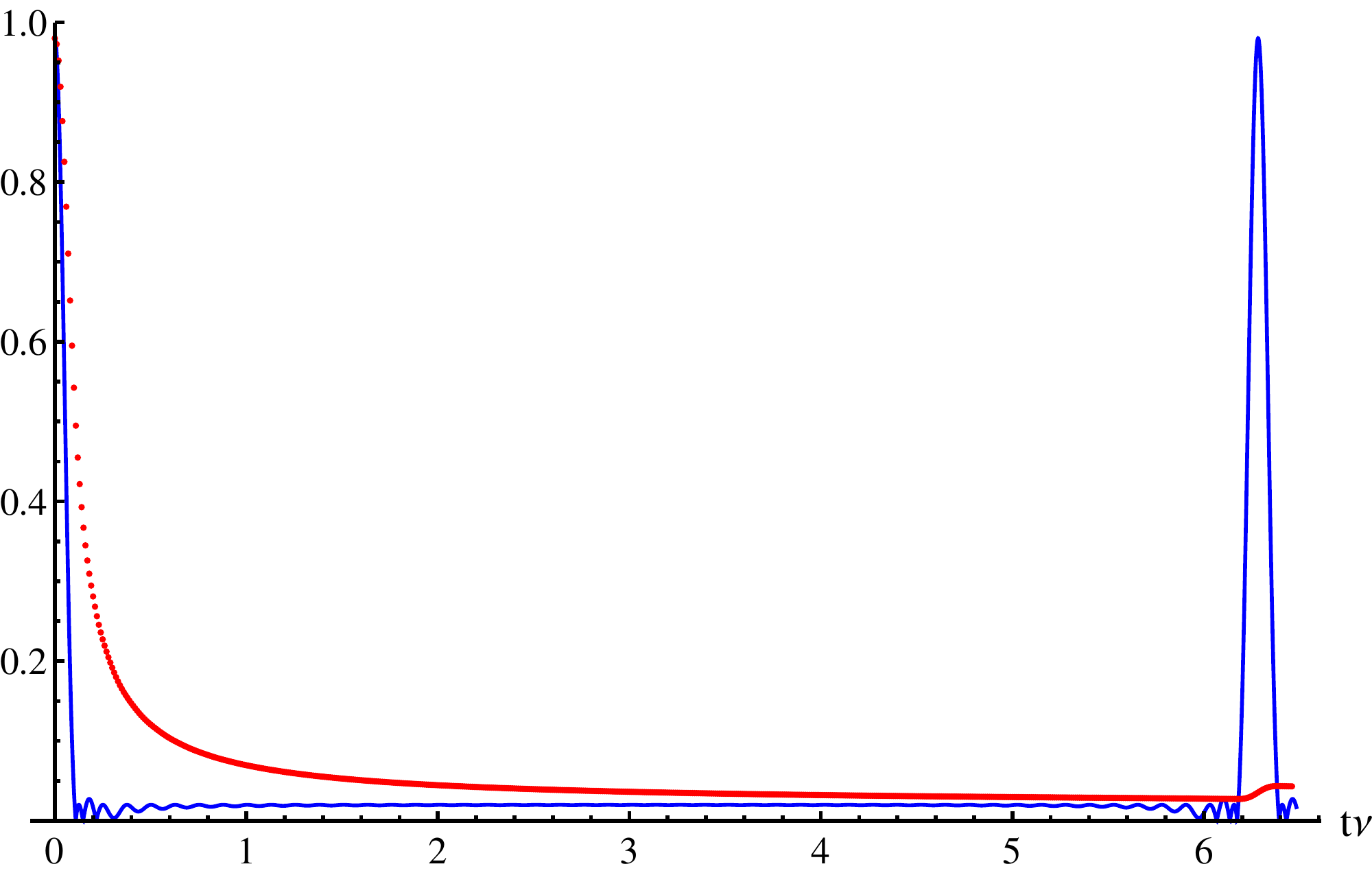}
    \caption{\(D_{\rho_0}(\rho_t,\omega)\) and its finite-time average
      for a full period of the harmonic oscillator with level spacing
      $\nu$ and the initial condition is a pure state spread equally
      over the first 50 energy levels (irrespective of phases). Notice
      that, despite the revival (blue solid line), the projector still
      equilibrates (red dotted line).}
    \label{fig:3}
\end{figure}
In this work we have proved several properties regarding observable
equilibration all of which apply to any system capable of
equilibration.

First we find an upper bound on the time scale of equilibration of
\emph{any} two-outcome measurement based on the rank of the projector
that defines it, which turns out to be very fast for small ranks. We
also find that typical measurements of any rank and any reasonable
number of outcomes are already equilibrated. To investigate time
scales we then turn to a natural class of measurements which are
typically initially out of equilibrium (those for which the initial
state gives a definite value) and show that most of these measurements
equilibrate fast -- approximately as fast as a rank-$1$ projector. On
the other hand, we construct a measurement which is extremely slow to
equilibrate. This shows that indeed, in order to obtain physically
realistic time scales, one must restrict to further constraints on the
measurements and/or on the system considered.

% For a generalization to $N$-outcome measurements of
% sections~\ref{sec:fast-equil-observ} and~\ref{sec:slow-equil-res}
% see appendix~\ref{sec:n-outc-meas}.

% describes the equations, but the take-away message is that
% Theorem~\ref{thr:4} generalizes well when one of the outcomes is of
% very large rank, while theorems~\ref{thr:1}, \ref{thr:3},
% and~\ref{thr:5} always generalize almost identically.

One characteristic that distinguishes the present work from some
previous results~\cite{ShortFarrelly11,Short11,Reimann10,Lin09,Lin10}
is that there was no need to assume non-degenerate energy gaps in
order to prove equilibration. % Indeed, Theorem~\ref{thr:4} applies to
% any Hamiltonian, regardless of the gaps spectrum. Similarly the
% distinguishability for typical projectors decays to zero fast even
% for systems with many degenerate energy gaps, like the harmonic
% oscillator. Even in Theorem~\ref{thr:1} the condition can be dropped
% if one restricts $K\ll \de$ (which can still be extremely large),
% because eq.~(\ref{eq:20}) can also be upper-bound by $2
% \sqrt{K/\de}$.
To emphasize this, Figure~\ref{fig:3} plots an example of the
distinguishability $D_{\rho_0}(\rho_t,\omega)$ of a harmonic
oscillator (with highly degenerate gaps) against its time average. The
function in the Figure decays fast for large $d$, with $\teq \sim
\frac{1}{d \nu}$, which implies that typical projectors equilibrate
fast, as given by equation~(\ref{eq:34}). Nevertheless, the function
returns to its original value at multiple times of $T_{\text{rev}} =
\frac{2\pi}{\nu}$, times at which a full revival manifests. This does
not conflict with equilibration because these revivals are so short
that they cannot affect the average significantly.

The results described here aim to be general, by making statements as
a function of the rank, and which apply to any system. However, there
are specific cases which deserve special attention. When the
measurement is restricted to a small subsystem of a complex many-body
system it is expected to equilibrate fast; however,
Theorem~\ref{thr:4} by itself does not lead to that conclusion since
the outcomes of these measurements are of high rank. Moreover, typical
measurements (in the Haar measure sense) need not necessarily
represent physically relevant measurements. It would be interesting to
study whether these results can be extended to typical measurements
with certain constraints, for instance measurements acting on a small
subsystem.

\acknowledgements We would like to thank Andreas Winter for helpful
discussions. AJS acknowledges support from the Royal Society. ASLM
acknowledges support from the CNPq. Part of this work was done while
the authors were at the Newton Institute programme on Mathematical
Challenges in Quantum Information.

\vspace{0.3cm}

\emph{Note added:} While finalizing this manuscript we became aware of
very recent independent work \cite{Goldsteinnew} which also addresses
the issue of the rapid equilibration of quantum systems.

\bibliography{references}

\appendix
%%%%%%%%%%%%%%%%%%%%%%%%%%%%%%%%%%%%%%%%%%%%%%%%%%%%%%%%%%%%%%%%%%%%%%%%%%%
%%%%%%%%%%%%%%%%%%%%%%%%%%%%%%%%%%%%%%%%%%%%%%%%%%%%%%%%%%%%%%%%%%%%%%%%%%%
%%%%%%% Here begin the appendices
%%%%%%%%%%%%%%%%%%%%%%%%%%%%%%%%%%%%%%%%%%%%%%%%%%%%%%%%%%%%%%%%%%%%%%%%%%%
%%%%%%%%%%%%%%%%%%%%%%%%%%%%%%%%%%%%%%%%%%%%%%%%%%%%%%%%%%%%%%%%%%%%%%%%%%%
\section{General bound on time scales of equilibration}
\label{sec:general-bound}

Following Short and Farrelly~\cite{ShortFarrelly11}, we focus on the
average distance between the expected value of some observable $A$ and
its infinite time average. As proved in a footnote in the
main text~\footnote{Define \unexpanded{$\Theta_T(t) = \frac{1}{T}$}
  for \unexpanded{$t \in [0,T]$}, and \unexpanded{$0$} otherwise.
  Then \unexpanded{$\braket{f}_T = \int_{- \infty}^{\infty}
    \Theta_T(t)f(t) \dif t$}, and \unexpanded{$\Theta_T(t) \leq
    \frac{5}{4} \frac{T}{T^2 + (t-T/2)^2}$}. }, the usual average
can be bounded by the Lorentzian average, therefore
\begin{align}
    \big\langle \left| \tr{A (\rho_t-\omega) } \right|^2 \big\rangle_T
    &\leq \frac{5\pi}{4} \big\langle  \left| \tr{A (\rho_t-\omega)} \right|^2 \big\rangle_{L_T} ,
\end{align}
where $\rho_t$ is the instantaneous state and $\omega = \lim_{T \rightarrow \infty} \braket{\rho_t}_T$ is the infinite time-averaged state.

Denoting by $E_j$ and $\ket{j}$ the eigenvalues and eigenvectors of the
Hamiltonian, and assuming an initially pure state for simplicity, the evolved state is
\begin{equation}
    \rho_t = \sum_{j,k=1}^{\tilde{d}} c_j c_k^* e^{-i(E_j - E_k)t} \ket{j} \bra{k}.
\end{equation}
Defining the matrix elements $\bra{j} A \ket{k} = A_{jk}$ one has
\begin{align}
 \langle  \left| \tr{A (\rho_t-\omega)} \right|^2 \rangle_{L_T}
    &= \Big\langle \Big| \sum_{j \ne k } (c_k^*A_{kj} c_j ) e^{-i(E_j - E_k)t} \Big|^2 \Big\rangle_{L_T} \notag\\
    &= \sum_{j \ne k, n \ne l} (c_k^* A_{jk} c_j ) (c_l^* A_{ln} c_n )^* \notag\\
    &\qquad\qquad\times\big\langle e^{-i[(E_j - E_k) - (E_n - E_l)]t} \big\rangle_{L_T}.
\end{align}

Each energy gap can be labeled by $G_{(j,k)} = E_j - E_k$ with indexes
$\alpha = (n,l)$ and $\beta = (j,k)$. In this way we define a vector
$v$ and a Hermitian matrix $M$
\begin{equation}
    v_{\alpha} = v_{n,l} = c_l^* A_{ln} c_n, \qquad M_{\alpha \beta}= \big\langle e^{i(G_\alpha - G_\beta)t} \big\rangle_{L_T}.
\end{equation}

With the above definitions we can see
\begin{align}
    & \langle  \left| \tr{A (\rho_t-\omega)} \right|^2 \rangle_{L_T}
    = \sum_{\alpha \beta} v_\alpha^* M_{\alpha \beta} v_\beta \le \|M\| \sum_{\alpha} |v_\alpha|^2 \nonumber \\
    &\quad\quad\le \|M\| \sum_{i ,j} |c_i|^2 |c_j|^2 |A_{ji}|^2 = \|M\| \text{Tr} (A\omega A^\dag\omega) \nonumber \\
    &\quad\quad\le \|M\| \sqrt{\text{Tr}(A^\dag A \omega^2)\text{Tr}(AA^\dag \omega^2)} \nonumber \\
    &\quad\quad\le \|M\| \|A\|^2 \text{Tr}(\omega^2) = \|M\| \frac{\|A\|^2}{d_{\text{eff}}}. \label{coditiontony}
\end{align}
On the first step, from the definition of the spectral norm
$\|M\|.\|v\| \ge \|Mv\|$ so $\|v\|.\|M\|.\|v\| \ge \|v\|. \|Mv\| \ge
|v^\dag M v|$. The other steps come from using the Frobenius inner
product and Cauchy-Schwarz inequality, and from the fact that for pure states $\tr{\omega^2}
= 1/\de$, with $\de = \Big( \sum_j |c_j|^2 \Big)^{-1}$.

By use of the identity
$\langle e^{i \nu t} \rangle_{L_T} = e^{-|\nu|
  T} \expo{i\nu T/2}$ we have $\left| M_{\alpha \beta} \right| = \left| \big\langle e^{i(G_\alpha -
    G_\beta)t} \big\rangle_{L_T} \right| = e^{-|G_\alpha - G_\beta|T}$
and,
since $M$ is a Hermitian matrix, standard results give
\begin{equation}
    \|M\| \le \max_\beta \sum_\alpha |M_{\alpha \beta}| = \max_\beta \sum_\alpha e^{-|G_\alpha - G_\beta|T}.
\end{equation}
We can now break the sum into intervals of width $\epsilon$, centered
around a given gap $G_\beta$. An interval $\epsilon$ can fit at most
$\mathcal{N}(\epsilon)$ gaps which satisfy $(k+1/2)\epsilon > G_\alpha
- G_\beta > (k-1/2)\epsilon$, which in turn implies $|G_\alpha -
G_\beta| \ge (|k| - 1/2)\epsilon$. Therefore
\begin{equation}
    |M_{\alpha \beta}| \le e^{-(|k| - 1/2) \epsilon T}.
\end{equation}
For the case $k = 0$ we just use the fact that $|M_{\alpha \beta}| \le
1$.

The sum is maximized by taking as many small values of $|k|$ as
possible, and since there are $\tilde{d}(\tilde{d} - 1)$ terms in total we have that
\begin{align}
    \max_\beta
    &\sum_\alpha |M_{\alpha \beta}| \le \mathcal{N}(\epsilon)
    \Bigg( 1 + 2 \sum_{k=1}^{\tilde{d}(\tilde{d} - 1)} e^{-(k-1/2)\epsilon T } \Bigg) \nonumber \\
    &= \mathcal{N}(\epsilon) \Bigg( 1 + 2 e^{\epsilon T/2 }
    \frac{e^{- \epsilon T }\big(e^{- \epsilon T \tilde{d}(\tilde{d} - 1)} - 1 \big) }{e^{- \epsilon T } - 1} \Bigg) \nonumber \\
    &\le \mathcal{N}(\epsilon) \Bigg( 1 + 2 \frac{e^{- \epsilon T /2}}{1 - e^{- \epsilon T }} \Bigg).
\end{align}

Finally, by using $\frac{1}{1-e^{-x}} \le 1 + \frac{1}{x}$ we get
\begin{align}
    \langle \left| \tr{A (\rho_t-\omega) } \right|^2 \rangle_T\!
    &\le \! \frac{5\pi}{2} \frac{\|A\|^2}{d_{\text{eff}}}
    \mathcal{N}(\epsilon) \bigg( \frac{1}{2} +\! e^{- \epsilon T /2}
    +\! \frac{e^{- \epsilon T /2}}{\epsilon T } \bigg) \notag\\
    &\le \frac{5\pi}{2} \frac{\|A\|^2}{d_{\text{eff}}}
    \mathcal{N}(\epsilon) \bigg( \frac{3}{2} + \frac{1}{\epsilon T } \bigg).
    \label{bound1}
\end{align}
Comparing this expression to the result in~\cite{ShortFarrelly11},
there is an improvement in the bound of order $\sim \log_2(\tilde{d})$.

The result is taken in the original paper by Short and Farrelly as a
stepping stone to obtain bounds on the time scale of equilibration
with respect to the distinguishability. By the same procedure they
take, we obtain that the average distinguishability for a set of
measurements $\mathcal{M}$ satisfies
\begin{equation}
    \big\langle D_{\mathcal{M}}(\rho_t,\omega) \big\rangle_T
    \le \frac{\mathcal{S}(\mathcal{M})}{4}
    \sqrt{\frac{5\pi \mathcal{N}(\epsilon)}{2 d_{\text{eff}}}
      \bigg( \frac{3}{2} + \frac{1}{\epsilon T } \bigg) }
\end{equation}
where $\mathcal{S}(\mathcal{M})$ is the total number of outcomes of
all the possible measurements.

A simple estimate illustrates how long these bounds on the time scales
still are. If one assumes the energy levels are more or less equally
distributed, the minimum distance between energy gaps scales as
$\epsilon_{\text{min}} \leq \frac{\Delta U}{\tilde{d}^2} $, with $\Delta U$
being the total energy range. This gives an equilibration time that
scales very roughly as $T_{\text{eq}} \sim \frac{1}{\de \epsilon} >
\frac{\tilde{d}}{\Delta U}$, which is terribly long for systems composed of
more than a few particles.

\section{Theorem~\ref{thr:4}}
\label{sec:theorem-refthr:4}
\label{sec:fast-equil-observ-1}

For a general initial state given by $\rho =
\sum_{jk} \rho_{jk} \ket{j} \bra{k}$, the purity of $\omt$ can be written
\begin{align}
    \tr{\omega_{L_{T}}^2} &= \tr{\omt\omt^\dagger}\notag\\
    &= \operatorname{Tr} \left[\sum_{n,m}\rho_{nm}\left<e^{-i(E_n-E_m)t}\right>_{L_{T}}\ket{n}\bra{m}\right.\notag\\[-14pt]
    &\qquad\qquad\qquad\left.\sum_{j,k}\rho^*_{jk}\left<e^{i(E_j-E_k)t}\right>_{L_{T}}\ket{k}\bra{j} \right]\notag\\
    &= \sum_{j,k}|\rho_{jk}|^2 \left| \left<e^{-i(E_j-E_k)t}\right>_{L_{T}} \right|^2 \notag\\
    &\leq \sum_{j,k}\rho_{jj}\rho_{kk} \left| \left<e^{-i(E_j-E_k)t}\right>_{L_{T}} \right|^2 \notag\\
    &= \sum_{j,k}p_{j}p_{k} \left| \left<e^{-i(E_j-E_k)t}\right>_{L_{T}} \right|^2,
    \label{eq:2}
\end{align}
where the previous to the last line is an equality for an initially
pure state, and the inequality follows in general from positivity of
the density operator~\footnote{\unexpanded{$\bra{v}\rho\ket{v}\geq 0$}
  for all \unexpanded{$\ket{v}$}, which applies in particular to
  \unexpanded{$\ket{v}=a\ket{j}+b\ket{k}$},
  so \unexpanded{$\left(\begin{smallmatrix} \rho_{jj}&\rho_{jk}\\
          \rho_{kj}&\rho_{kk}\end{smallmatrix}\right)$} is positive
  too, and, since the determinant must be greater than or equal to
  zero, \unexpanded{$|\rho_{jk}|^2\leq \rho_{jj}\rho_{kk}$}.}.

By use of the identity $\langle e^{i \nu t} \rangle_{L_T} = e^{-|\nu|
  T} \expo{i\nu T/2}$ we can in turn see
\begin{align}
    \label{eq:41}
    \tr{\omega_{L_{T}}^2} &\le \sum_{jk} p_j p_k \expo{-2\left| E_j - E_k \right|T}
\end{align}
the above being an equality for pure states.

To see the connection to $\eta_{\frac{1}{T}}$, we define the function
\begin{equation}
    \label{eq:16}
    g(x) =
    \begin{cases}
        1, \mbox{ if } x \in [0,1) \\
        0, \mbox{ otherwise. }
    \end{cases}
\end{equation}
This definition is important because it allows us to upper bound the
exponential as
\begin{equation}
    \label{eq:3}
    \expo{-\left| x \right|}\leq \sum_{n = 0}^{\infty} \expo{-n \delta} g \left( \frac{\left| x \right|}{\delta} - n \right),
    \quad \forall \delta > 0.
\end{equation}
So we have
\begin{align}
    \label{eq:4}
    \tr{\omt^2} &\leq \sum_{n = 0}^{\infty} \expo{-n \delta} \sum_{j} p_j \sum_{k} p_k \notag\\
    &\quad\quad\qquad\times g \left( \frac{2\left| E_j - E_k \right|T}{\delta} - n \right) \notag\\
    &= \sum_{n = 0}^{\infty} \expo{-n \delta} \sum_{j} p_j \!\! \!\!
    \sum_{\substack{k\;: \\ \left(2\left| E_j - E_k \right| \frac{T}{\delta} - n \right) \in [0, 1)}} \!\! \!\! \!\! \!\! p_k \notag\\
    & \leq \sum_{n = 0}^{\infty} \expo{-n \delta} \sum_{j} p_j
    \left[\sum_{\substack{k\;:\\ E_k \in I_{-}}} p_k + \sum_{\substack{k\;:\\ E_k \in I_{+}}} p_k \right]
    \notag\\
    &\leq \sum_{n = 0}^{\infty} \expo{-n \delta} \sum_{j} p_j \left(2 \eta_{\frac{\delta}{2T}} \right) \notag\\
    &= \frac{2 \eta_{\frac{\delta}{2T}}}{1-\expo{-\delta}}
\end{align}
where $I_{+} = \left[E_{+}, E_{+}+ \frac{\delta}{2T} \right)$,
$I_{-} = \left(E_{-} - \frac{\delta}{2T}, E_{-}\right]$,
$E_\pm=E_j \pm \frac{n \delta}{2T}$, and the inequality in the
penultimate line applies for any combination of $n$ and $j$.

The important fact is that $\delta$ is ours to define, and determines
the balance between the quotient and $\eta$. We can, for instance, set
it to $2$
\begin{equation}
    \label{eq:5}
    \tr{\omt^2} \leq \frac{2\eta_{\frac{1}{T} }}{1-\expo{-2}} < 2.32 \eta_{\frac{1}{T} }
\end{equation}
and we know the system has equilibrated when the total probability
inside any energy interval of size $1/T$ is small. We can also
manually fix the energy interval with $\delta =2T \Delta E$, so
\begin{align}
    \label{eq:6}
    \tr{\omt^2} &\leq \frac{2\eta_{\Delta E}}{1 - \expo{-2 T \Delta E}}
\end{align}
which still leaves us with a free variable, but clearly singles out
the time-dependence.

In any case, we have that
\begin{align}
    \label{eq:7}
    \braket{\tr{\rho_t P}}_T
    &\leq \frac{5\pi}{4} \sqrt{K\, \frac{2\eta_{\frac{1}{T} }}{1-\expo{-2}}} \notag\\
    &< 6 \sqrt{\eta_{\frac{1}{T} } K }.
\end{align}

 For clarity, Figure~\ref{fig:1}
 displays a graphic illustration of $\eta_\epsilon$.
 \begin{figure}
     \centering
     \includegraphics[width=0.5\textwidth]{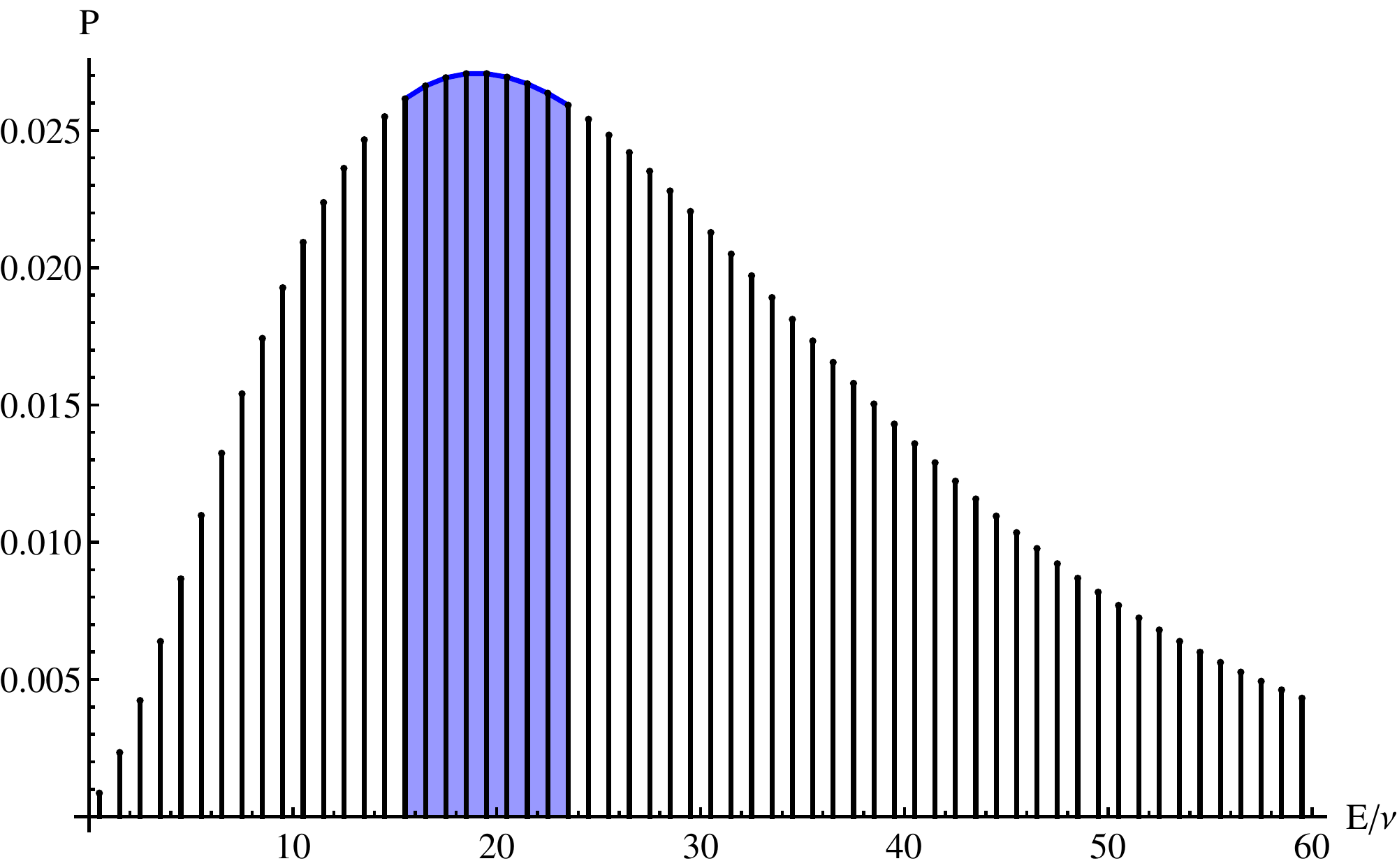}
     \caption{Graphic illustration of $\eta_\epsilon$. The vertical
       lines display the probability distribution in energy space of a
       $3$-dimensional harmonic oscillator with energy levels $E_n =
       (n+1/2)\nu$, under the Boltzmann distribution (accounting for
       degeneracies) with temperature $= 10\nu$. The blue region
       represents $\eta_{8\nu}$ (the maximum probability that can be
       found inside an energy interval of size $8\nu$).}
     \label{fig:1}
 \end{figure}

\subsection{Gaussian Distribution Example}
\label{sec:gauss-distr}

Taking a pure initial state whose probability
distribution can be approximated by
the following continuous function of energy
\begin{equation}
    \label{eq:22}
    p(E) = \frac{1}{\sqrt{2\pi} \sigma_E} e^{-\frac{E^2}{2\sigma_E^2}},
\end{equation}
we show below that the purity of
$\omt$
can be approximated by
\begin{equation}
    \label{eq:23}
    \tr{ \omega^2_{L_T} } \approx \frac{1}{2\sqrt{\pi} \sigma_E T},
\end{equation}
for $\sigma_E T \gg 1$.
Meanwhile, we also show that $\eta$ satisfies
\begin{equation}
    \label{eq:24}
    \frac{2\eta_{\frac{1}{T} }}{1-\expo{-2}} < \frac{3.28}{2\sqrt{\pi}\sigma_E T},
\end{equation}
and therefore
\begin{equation}
    \label{eq:8}
        {\eta_{\frac{1}{T} }} < \frac{0.4}{\sigma_E T}.
\end{equation}
It is interesting to see that, despite the approximations taken, the
function $\eta$ gives good estimates for the purity of the average,
with much simpler calculations.

\subsubsection{Calculations}
\label{sec:purity-decay-timesc-1}
The Fourier transform of the energy distribution defined in eq.~\eqref{eq:22}
is
\begin{equation}
    \mu(t) = \int_{-\infty}^{\infty} p(E) e^{-i E t} dE
    = e^{-\frac{\sigma_E^2 t^2}{2}}.
\end{equation}
Thus, using the continuous version of eq.~(\ref{eq:41}) and
assuming a pure state,
\begin{align}
    \tr{\omega_{L_{T}}^2} &= \iint_{- \infty}^{\infty} dE\, dE'\, p(E) p(E') \expo{-2\left| E - E' \right|T} \notag\\
    &= \iint dE\, dE'\, p(E) p(E') \int_{- \infty}^{\infty} \frac{dt }{\pi} \frac{\expo{i(E - E')2t}T}{T^2+t^2}  \notag\\
    &= \int_{- \infty}^{\infty} \frac{dt }{\pi} \frac{ T}{T^2+t^2} \left| \mu(2t) \right|^2  \notag\\
    &= \frac{T}{\pi} \int_{-\infty}^{\infty}
    \frac{e^{-4 \sigma_E^2 t^2}}{T^2 + t^2} dt \notag\\
    &= \frac{1}{\pi} \int_{-\infty}^{\infty}
    \frac{e^{-4 \sigma_E^2 T^2 x^2}}{1 + x^2} dx \notag\\
    &= e^{4 \sigma_E^2 T^2 } \left( 1 - \erf\left(2\sigma_E T\right) \right),
\end{align}
where the second line uses the inverse Fourier transform of $\expo{-2\left| E - E' \right|T}$.

We can use the expansion of the
Error Function
\begin{equation}
    \erf(x) = 1 - \frac{e^{-x^2}}{\sqrt\pi x} + e^{-x^2}\mathcal{O}\left( x^{-3} \right)
\end{equation}
which, for $\sigma_E T$ large enough, results in eq.~\eqref{eq:23}.

\subsubsection{Comparison to $\eta$}
\label{sec:comparison-}

As we have established previously
\begin{equation}
    % \label{eq:1}
    \tr{\omt^2} < 2.32 \eta_{\frac{1}{T} }.
    \tag{\ref{eq:5}}
\end{equation}
Since the density $p(E)$ is a Gaussian, $\eta_\epsilon$ is obviously
at its center. Thus
\begin{align}
    \label{eq:2-2}
    &\eta_{\frac{1}{T} } = \int_{-\frac{1}{2T} }^{\frac{1}{2T} }
    \frac{1}{\sqrt{2\pi}\sigma} \expo{-\frac{E^2}{2\sigma^2} } \dif E
    \leq \frac{1}{\sqrt{2\pi}\sigma T} \notag\\
    &\tr{\omt^2} < \frac{3.28}{2\sqrt{\pi}\sigma T},
\end{align}
where the integral was trivially approximated by
$\int_{0}^{\epsilon} p(E) \dif E < \epsilon p(0)$.

\subsubsection{Infinite Time Limit}
\label{sec:an-extra-point}
The upper bounds we calculated in the example for
$\tr{\omega^2_{L_T}}$ tend to zero as $T$ tends to infinity, which
seems to contradict the fact that for the infinite time average state
$\omega$, $\tr{\omega^2}$ is not zero. The reason this occurs is that
we need to be careful when averaging expressions like $\sum
p_j p_k\exp[-2i(E_j-E_k)t]$ over energy levels. The terms with
$E_j=E_k$ in the integrals arising from averages over $p(E)$ do not
contribute (they are of measure zero), whereas in the finite sum they
did, giving $\sum_{j=k}p_j p_k=1/\de$.

Another way of looking at this is that taking the continuous limit
implies taking $\de = \infty$.

\section{Typical Projectors}
\label{sec-1}

Here, we prove theorems~\ref{thr:3} and~\ref{thr:5}.

\subsection{Proof of Theorem~\ref{thr:3}}
\label{sec-1-1}\label{sec:already-equil}
Here, we use the fact (well known from representation theory) that,
for any operator $M$,
\begin{equation}
    \label{eq:37}
    \braket{U^\ts M(U^\ts)^\dagger}_U =
    % \frac{\tr{\Pi_S M}}{\tr{\Pi_S}}\Pi_S + \frac{\tr{\Pi_A M}}{\tr{\Pi_A}} \Pi_A,
    \alpha\Pi_S + \beta \Pi_A,
\end{equation}
where $\Pi_S = (\id^\ts + S)/2$ and $\Pi_A = (\id^\ts - S)/2$ are the
projectors onto the symmetric and antisymmetric subspaces of
dimensions $\frac{d(d+1)}{2} $ and $\frac{d(d-1)}{2} $ respectively,
and $S$ is the swap operator on $\Hil\otimes\Hil$: $S \ket{a}\ket{b} =
\ket{b}\ket{a}$. Given that $\tr{SA\otimes B} = \tr{AB}$,
it is straightforward to see that, for $M=P^\ts$,
\begin{align}
    \label{eq:230803}
    \alpha &= \frac{\tr{\Pi_S P^\ts}}{\tr{\Pi_S}}  \notag\\
    &= \frac{1}{d(d+1)} \left[ \tr{P\otimes P}+\tr{P^2} \right] \notag\\
    &= \frac{K(K+1)}{d(d+1)} ;
\end{align}
and similarly $\beta = \frac{\tr{\Pi_A P_U^\ts}}{\tr{\Pi_A}} = \frac{K(K-1)}{d(d-1)}$.

Finally, going back to the distinguishability and using the fact that
$\tr{\rho_t\omega}=\tr{\omega^2}$,
\begin{align}
    \label{eq:135185}
    &\braket{D_{P_U}(\rho_t,\omega)^2}_U \notag\\
    &\quad= \braket{\left( \tr{P_U(\rho_t-\omega)} \right)^2}_U \notag\\
    &\quad= \tr{\braket{P_U\otimes P_U}_U(\rho_t - \omega)^\ts}. \notag\\
    &\quad= \tr{\left( \alpha\Pi_S + \beta\Pi_A \right)
      (\rho_t - \omega)\otimes(\rho_t - \omega)} \notag\\
    &\quad= \frac{\alpha+\beta}{2}\left( \tr{\rho_t-\omega} \right)^2
    + \frac{\alpha-\beta}{2}\tr{(\rho_t-\omega)^2} \notag\\
    &\quad= \frac{K}{2d} \left( \frac{K+1}{d+1} - \frac{K-1}{d-1} \right)
    \left( \tr{\rho_t^2}+\tr{\omega^2}-2\tr{\rho_t\omega}  \right)\notag\\
    &\quad= \frac{K}{d} \frac{d-K}{d^2-1}\tr{\rho_t^2-\omega^2}.
\end{align}

\subsection{Proof of Theorem~\ref{thr:5}}
\label{sec-1-2}
\label{4aead996-94b1-4025-9891-e7bf9aa7c756}
\label{sec:average-proj-cont-rho}

To simplify notation in the proof, we will assume $d>2$ and set $d' = \dim{\Hil'}
= d-1$ and $ K' = \rank{P_U}  = K-1$.

Again, the quantity we are interested in will be
\begin{align}
    \label{eq:831588}
    \braket{D_{\Pi_U}(\rho_t,\omega)}_U &= \braket{\left| \tr{\Pi_U(\rho_t-\omega)} \right|}_U \notag\\
    &\leq \sqrt{\braket{\Big( \tr{\Pi_U(\rho_t-\omega)} \Big)^2 }_U} \notag\\
    &= \sqrt{\tr{\braket{\Pi_U\otimes \Pi_U}_U(\rho_t - \omega)^\ts}}.
\end{align}
Using the results from section~\ref{sec:already-equil}, it is easy to
see that
\begin{align}
    \label{eq:506655}
    &\braket{\Pi_U\otimes \Pi_U}_U \notag\\
    &= \braket{\rho_0\otimes \rho_0 + \rho_0\otimes P_U + P_U \otimes \rho_0 + P_U \otimes P_U}_U \notag\\
    &= \rho_0\otimes \rho_0 + \rho_0\otimes \braket{P_U}_U \notag\\
    &\qquad+ \braket{P_U}_U \otimes \rho_0 + \braket{P_U \otimes P_U}_U \notag\\
    &= \rho_0\otimes \rho_0 + \frac{K'}{d'} \rho_0\otimes \id' + \frac{K'}{d'} \id' \otimes \rho_0 \notag\\
    & \qquad + \frac{K'(K'+1)}{d'(d'+1)}\Pi_S' + \frac{K'(K'-1)}{d'(d'-1)}\Pi_A',
\end{align}
where $\id' = \id - \rho_0$, $\Pi_S' = \id'^\ts\Pi_S\id'^\ts$ and $\Pi_A' =
\id'^\ts\Pi_A\id'^\ts$. So that
\begin{align}
    \label{eq:245099}
    &\braket{\tr{\Pi_U(\rho_t-\omega)}^2}_U \notag\\
    &\qquad= \tr{\braket{\Pi_U\otimes \Pi_U}_U(\rho_t - \omega)^\ts} \notag\\
    &\qquad= f(t)^2 - 2f(t)^2 \frac{K'}{d'} + \frac{K'(K'+1)}{d'(d'+1)}\tr{\Pi_S A(t)} \notag\\
    &\qquad\qquad+ \frac{K'(K'-1)}{d'(d'-1)} \tr{\Pi_A A(t)},
\end{align}
where $A(t) = \left[ (\id - \rho_0)(\rho_t - \omega)(\id - \rho_0) \right]^\ts$, and
$f(t) = \tr{\rho_0(\rho_t - \omega)}$.
Since
\begin{align}
    \label{eq:905164}
    \tr{\Pi_S A(t)}
    &= \tr{\frac{\id^\ts + S}{2} \left[ (\id - \rho_0)(\rho_t - \omega)(\id - \rho_0) \right]^\ts} \notag\\
    &= \frac{1}{2} \tr{(\id - \rho_0)(\rho_t - \omega)(\id - \rho_0)(\rho_t - \omega)} \notag\\
    &\qquad+ \frac{1}{2} \Big( \tr{\rho_t - \omega - \rho_0(\rho_t - \omega) } \Big)^2 \notag\\
    &\leq \frac{1}{2} \left( 1-\frac{1}{\de}\right) + \frac{1}{2} \left(f(t) \right)^2,
\end{align}
where the last line is due to $\tr{\Pi X\Pi X} \leq \tr{X^2}$ for a
projector $\Pi$ (in this case, $\id-\rho_0$), which
follows from the Cauchy-Schwarz inequality
$\tr{\Pi X\Pi X} \leq \sqrt{\tr{X^2}\tr{\Pi X\Pi X}} $. Thus, since
\begin{align}
    \label{eq:905608}
    \tr{\Pi_A A(t)}
    &= \tr{\frac{\id^\ts - S}{2} \left[ (\id - \rho_0)(\rho_t - \omega) \right]^\ts} \notag\\
    &\leq \frac{1}{2} \left( f(t)^2 - 1 + \frac{1}{\de}\right),
\end{align}
we have
\begin{align}
    \label{eq:471969}
    &\braket{\tr{\Pi_U(\rho_t-\omega)}^2}_U \notag\\
    &\quad\leq f(t)^2 \left( 1 + \frac{1}{2}\frac{K'(K'+1)}{d'(d'+1)}
      + \frac{1}{2}\frac{K'(K'-1)}{d'(d'-1)} - 2\frac{K'}{d'}\right) \notag\\
    &\quad\qquad+ \frac{1}{2} \left( 1-\frac{1}{\de} \right)\left[ \frac{K'(K'+1)}{d'(d'+1)}
      - \frac{K'(K'-1)}{d'(d'-1)} \right] \notag\\
    &\quad= f(t)^2 \left( 1 -2 \frac{K'}{d'} + \frac{K'}{d'} \frac{K'd'-1}{d'^2-1} \right) \notag\\
    &\quad\qquad+ \left( 1-\frac{1}{\de} \right)\frac{K'}{d'} \frac{d'-K'}{d'^2-1} \notag\\
    &\quad\leq f(t)^2 + \frac{1}{4d'}.
\end{align}
The last line can be derived from the fact that the first
parentheses in the penultimate equation is maximized by $K'=0$, and the
second term is maximized by $K' = d'/2$, along with $\de \leq d$ and
$ \frac{d'^2}{(d'-1)(d'+1)^2}\leq \frac{1}{d'} $.

\subsection{Proof of typical initial distinguishability} \label{sec:initial-dist}

To show that observables with a definite initial value are typically out of equilibrium (and thus undergo a non-trivial equilibration process) we consider the initial distinguishability between $\rho_0$ and  $\omega$ for a measurement of $\Pi_U$, averaged over $U$. As before, we will  set $d' = \dim{\Hil'}= d-1$ and $ K' = \rank{P_U}  = K-1$.
\begin{align}
    \braket{D_{\Pi_U}(\rho_0,\omega)}_U &= \braket{\left| \tr{\Pi_U(\rho_0-\omega)} \right|}_U \notag\\
    &=\braket{\left(1-\tr{\Pi_U \omega} \right) }_U \notag\\
    &=1-\tr{\left[ \rho_0 + \frac{K'}{d'}(\id - \rho_0) \right] \omega} \notag\\
    &=\left(1- \frac{K'}{d'} \right) \left(1 - \tr{\rho_0 \omega} \right) \notag\\
    &\geq \left(1- \frac{K-1}{d-1} \right) \left(1- \frac{1}{\de} \right),
\end{align}
where the last line is an equality if $\rho_0$ is pure.

Note that because refining a measurement (by splitting one outcome into many) can only increase the distinguishability, it follows that
\begin{equation}
    \braket{D_{\cm_U^{\rho_0}}(\rho_t,\omega)}_U \geq \left(1- \frac{K-1}{d-1} \right) \left(1- \frac{1}{\de} \right)
\end{equation}
where here $K$ is the rank of the measurement projector containing $\rho_0$.

\subsection{Proof of Corollary~\ref{co:2}}
\label{sec:corollary-refco:1}

Denoting by $K_j$ the rank $P_j$, we have that
\begin{equation}
    \label{eq:42}
    \sum_j K_j = d'
\end{equation}
and
\begin{align}
    \label{eq:35}
    &\braket{D_{\cm_U^{\rho_0}}(\rho_t,\omega)}_U \\
    &\quad= \frac{1}{2} \braket{D_{\rho_0+P_{1U}}(\rho_t,\omega)}_U
    + \frac{1}{2} \sum_{j=2}^{N} \braket{D_{P_{jU}}(\rho_t,\omega)}_U \notag\\
    &\quad\leq \frac{1}{2} \sqrt{\braket{D_{\rho_0+P_{1U}}(\rho_t,\omega)^2}_U}
    + \frac{1}{2} \sum_{j=2}^{N} \sqrt{\braket{D_{P_{jU}}(\rho_t,\omega)^2}_U}.\notag
\end{align}

Following the proof in appendix~\ref{sec-1-2} above, and using the
fact that
$1-\frac{1}{\de} \leq 1-\frac{1}{d} = \frac{d'}{d'+1}$, leads to
% TODO Too Fast!! (780910)
\begin{align}
    \label{eq:40}
    &\braket{D_{P_{jU}}(\rho_t,\omega)^2}_U \\
    &\qquad \le f(t)^2 \frac{K_j}{d'} \frac{K_j d'-1}{d'^2-1}
    + \frac{K_j}{d'+1} \frac{d'-K_j}{d'^2-1} .\notag
\end{align}
and
\begin{align}
    \label{eq:39}
    &\braket{D_{\rho_0+P_{1U}}(\rho_t,\omega)^2}_U \\
    &\qquad \le f(t)^2 \left( 1 -2 \frac{K_1}{d'} + \frac{K_1}{d'} \frac{K_1 d'-1}{d'^2-1} \right)
    + \frac{K_1}{d'+1} \frac{d'-K_1}{d'^2-1} \notag\\
    &\qquad\leq f(t)^2 \left( 1 + \frac{K_1}{d'} \frac{K_1 d'-1}{d'^2-1} \right)
    + \frac{K_1}{d'+1} \frac{d'-K_1}{d'^2-1} \notag\\
    &\qquad= f(t)^2 + \braket{D_{P_{1U}}(\rho_t,\omega)^2}_U.
\end{align}

By using the fact that $\sqrt{a+b}\leq\sqrt{a}+\sqrt{b}$ for $a,b\geq
0$, this leads to
\begin{align}
    \label{eq:45}
    &\braket{D_{\cm_U^{\rho_0}}(\rho_t,\omega)}_U \\
    &\quad\leq \frac{1}{2} \left| f(t) \right|
    + \frac{1}{2} \sum_{j=1}^{N} \sqrt{\braket{D_{P_{jU}}(\rho_t,\omega)^2}_U}.\notag
\end{align}

Through the method of Lagrange multiplier it is easy to see that the
sum in eq.~(\ref{eq:45}), expressed in terms of the $K_j$s through
eq.~(\ref{eq:40}) and constrained by eq.~(\ref{eq:42}), is maximized
by taking all $P_j$ to be of equal rank. This rank must then be $K_j =
d'/N$. Substituting that into eq.~(\ref{eq:40}), and using the inequalities
$\frac{d'^2 - N}{d'^2 - 1}  < 1$,
$1- 1/N < 1$ and $d'^3 \leq (d'+1)^2(d'-1)$,
\begin{align}
    \label{eq:45-2}
    &\braket{D_{\cm_U^{\rho_0}}(\rho_t,\omega)}_U \notag\\
    &\quad\leq \frac{1}{2} \left| f(t) \right|
    + \frac{1}{2} N \sqrt{ \frac{f(t)^2}{N^2} \frac{d'^2-N}{d'^2-1}
      + \frac{d'/N}{d'+1} \frac{d'\left( 1-\frac{1}{N}  \right)}{d'^2-1}}.\notag \\
    &\quad\leq \frac{1}{2} \left| f(t) \right|
    + \frac{1}{2} \sqrt{{f(t)^2} + \frac{N d'^2}{(d'+1)^2(d'-1)}}.\notag\\
    &\quad\leq \left| f(t) \right| + \frac{1}{2} \sqrt{ \frac{N}{d'}}.
\end{align}

\section{Slow Equilibration}
\label{sec:it-stays-inside}

The \emph{Slow Equilibration} result can be rigorously stated as the
following theorem.
\begin{theorem}[Slow equilibration]
    \label{thr:1}
    Given any Hamiltonian, any pure state
    $\ket{\psi(t)}\in \hil$ with effective dimension $\de$, any positive integer $K \ll \de$
    and any $\epsilon > 0$; take $\sigma_E$ to be the standard
    deviation in energy of $\ket{\psi}$, and $P_\hk$ to be the
    projector onto the subspace
    \begin{equation}
        \label{eq:10}
        \hk = \spn\Set{\ket{\psi(j \tau)}}{j=0,\ldots,K-1},
    \end{equation}
    with $\tau = 2 \epsilon / \sigma_E$; then the distinguishability
    satisfies the following two equations~\footnote{The time range in
      equation~(\ref{eq:9}) can be increased to $2K\epsilon/\sigma_E$ by
      a slightly more complicated construction of $\hk$.}
    \begin{equation}
        \label{eq:9}
       \! D_{P_\hk}(\rho_t,\omega) \geq 1 - \epsilon^2 - \sqrt{\frac{K}{\de} },
        \quad \forall t\! \in \!\left[ 0,\frac{(2K-1)\epsilon}{\sigma_E} \right],
    \end{equation}
    and
    \begin{equation}
        \label{eq:20}
        \braket{D_{P_\hk}(\rho_t,\omega) }_{T \rightarrow \infty} \leq 2 \sqrt{\frac{K}{{\de}}} \ll 1
    \end{equation}
    (it is above some constant for long times, but still equilibrates
    eventually).
\end{theorem}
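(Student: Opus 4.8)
The plan is to reduce both inequalities to a single, easily controlled quantity: the overlap $\tr{P_\hk\omega}$, which is small precisely because $\omega$ spreads over $\sim\de$ energy levels while $P_\hk$ has rank at most $K\ll\de$. Since the measurement has two outcomes, $D_{P_\hk}(\rho_t,\omega)=\left|\tr{P_\hk\rho_t}-\tr{P_\hk\omega}\right|$, so everything comes down to estimating the two overlaps $\tr{P_\hk\rho_t}$ and $\tr{P_\hk\omega}$. For the latter I would apply the Cauchy--Schwarz inequality for the Hilbert--Schmidt inner product, together with $\tr{P_\hk^2}=\tr{P_\hk}=\dim\hk\le K$ and the bound $\tr{\omega^2}\le 1/\de$ already used in Theorem~\ref{thr:4}, to obtain
\begin{equation}
    \tr{P_\hk\omega}\le\sqrt{\tr{P_\hk}\,\tr{\omega^2}}\le\sqrt{\frac{K}{\de}}.
\end{equation}
This same small number then drives both eq.~(\ref{eq:9}) and eq.~(\ref{eq:20}).

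For the long-time lower bound eq.~(\ref{eq:9}), the plan is to prove a ``speed limit'' guaranteeing that the evolved state never drifts far from $\hk$ on the stated interval. First I would remove the global phase by working with $\ket{\tilde\psi(t)}=\expo{i\bar E t}\ket{\psi(t)}$, which leaves both $\rho_t$ and the subspace $\hk$ unchanged but recentres the spectrum about $\bar E$. A direct computation then gives
\begin{equation}
    \big\|\,\ket{\tilde\psi(t)}-\ket{\tilde\psi(s)}\,\big\|^2=4\sum_n p_n\sin^2\!\left(\tfrac{(E_n-\bar E)(t-s)}{2}\right)\le\sigma_E^2\,(t-s)^2,
\end{equation}
using $\sin^2 x\le x^2$ and the definition of $\sigma_E$. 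Because the snapshots sit at $j\tau$ with $\tau=2\epsilon/\sigma_E$, every $t\in[0,(2K-1)\epsilon/\sigma_E]$ lies within $\tau/2=\epsilon/\sigma_E$ of some snapshot time $j\tau$, so $\ket{\tilde\psi(t)}$ is within Euclidean distance $\epsilon$ of the vector $\ket{\tilde\psi(j\tau)}\in\hk$. Hence the component of $\ket{\tilde\psi(t)}$ orthogonal to $\hk$ has norm at most $\epsilon$, giving $\tr{P_\hk\rho_t}=\|P_\hk\ket{\tilde\psi(t)}\|^2\ge 1-\epsilon^2$. Combining with the overlap bound and $|a-b|\ge a-b$ yields $D_{P_\hk}(\rho_t,\omega)\ge 1-\epsilon^2-\sqrt{K/\de}$.

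For the infinite-time upper bound eq.~(\ref{eq:20}), no dynamical input is needed beyond the definition of $\omega$. Using $|a-b|\le a+b$ for $a,b\ge 0$ gives $D_{P_\hk}(\rho_t,\omega)\le\tr{P_\hk\rho_t}+\tr{P_\hk\omega}$, and taking the infinite-time average, together with $\braket{\rho_t}_{T\to\infty}=\omega$ so that $\braket{\tr{P_\hk\rho_t}}_{T\to\infty}=\tr{P_\hk\omega}$, yields $\braket{D_{P_\hk}(\rho_t,\omega)}_{T\to\infty}\le 2\tr{P_\hk\omega}\le 2\sqrt{K/\de}$.

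I expect the genuinely delicate step to be the speed-limit and covering argument behind eq.~(\ref{eq:9}): one must use the variance $\sigma_E^2$ rather than $\sum_n p_n E_n^2$, which is exactly what the phase redefinition $\ket{\tilde\psi(t)}$ secures, and one must check that the half-windows of radius $\tau/2$ around the $K$ snapshot times $j\tau$ tile the interval $[0,(2K-1)\epsilon/\sigma_E]$ without gaps, so that every time is covered. Everything else is Cauchy--Schwarz and the triangle inequality.
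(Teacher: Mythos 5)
Your proposal is correct and follows essentially the same route as the paper: the same Cauchy--Schwarz bound $\tr{P_\hk\omega}\le\sqrt{K/\de}$, the same covering of $[0,(2K-1)\epsilon/\sigma_E]$ by half-windows of radius $\tau/2$ around the snapshot times, and the same infinite-time-average argument using $\braket{\rho_t}_{T\to\infty}=\omega$. The only (cosmetic) difference is in the short-time continuity lemma: you bound the vector-norm distance of the phase-recentred state via $\sin^2 x\le x^2$ and then the norm of the component orthogonal to $\hk$, whereas the paper bounds the fidelity $\left|\pint{\psi(t)}{\psi(j\tau)}\right|^2\ge 1-\sigma_E^2(t-j\tau)^2$ directly via $\cos x\ge 1-x^2/2$ (where the variance appears automatically) together with $P_\hk\ge P_{j\tau}$ --- both yield the identical bound $\tr{P_\hk\rho_t}\ge 1-\epsilon^2$.
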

\begin{proof}
    \label{pr:2}
    Since $\tau$ is a very small time step, the overlap between
    $\ket{\psi(0)}$ and $\ket{\psi(\tau)}$ is nearly $1$. To prove
    this we write $\ket{\psi(t)}$ in the energy basis
    \begin{equation}
        \ket{\psi(t)} = \sum_{n}^{\tilde{d}} c_n e^{-i E_n t} \ket{n}
    \end{equation}
    and calculate its internal product with its initial state
    \begin{align}
        &\left| \braket{\psi(t) \middle| \psi(0)} \right|^2
        = \left| \sum_n^{\tilde{d}} |c_n|^2 e^{-i E_n t} \right|^2 \notag\\
        &\quad= \sum_{n m}^{\tilde{d}} |c_n|^2 |c_m|^2 \cos\left( (E_n-E_m)t \right) \notag\\
        &\quad\ge 1 - \frac{t^2}{2} \sum_{n m}^{\tilde{d}} |c_n|^2 |c_m|^2 \left( E_n^2\! + \!E_m^2\! -\! 2E_nE_m \right) \notag\\
        &\quad= 1 - \left( 2\overline{E^2} - 2\overline{E}^2 \right)\frac{t^2}{2} \notag\\
        &\quad= 1 - \sigma_E^2 t^2,
    \end{align}
    where $\sigma_E$ is the standard deviation in energy. So, we have that
    \begin{align}
        \left| \braket{\psi(t) \middle| \psi(0)} \right|^2
        \geq 1 - \epsilon^2,
    \end{align}
    $\forall t$ such that $\left| t \right| \leq \tau/2 =
    \epsilon/\sigma_E$. This trivially implies
    \begin{equation}
        \left| \braket{\psi(t) \middle| \psi(t')} \right|^2
        \geq 1 - \epsilon^2,
    \end{equation}
    $\forall t,t'$ such that $\left| t - t' \right| \leq \tau/{2}$.

    Meanwhile, $\hk$ contains, by definition, all projectors
    $\ket{\psi(j\tau)}\!\bra{\psi(j\tau)}$ for $j$ up to $K-1$.
    Therefore, for any time $t$ up to $(K-\,^1\!/_2)\tau$, the state
    $\ket{\psi(t)}$ is very close to one of these projectors.

    In other words, there is always a value of $0 \leq j \leq K-1$ such
    that $|t - j\tau| \leq \tau/2$ and
    \begin{align}
        \tr{\rho_t P_{\hk}}
        &= \bra{\psi(t)} \left[ P_{j\tau} + P_{j\tau}^\perp \right] \ket{\psi(t)} \notag\\
        &\geq \left| \braket{\psi(j\tau) \middle| \psi(t)} \right|^2 \notag\\
        &\ge 1 - \epsilon^2,
    \end{align}
    where $P_{t} = \ket{\psi(t)}\bra{\psi(t)}$ and $P_{t}^\perp = P_\hk -
    P_{t}$. This directly leads to eq.~(\ref{eq:9}),
    \begin{align}
        \label{eq:21}
        D_{P_\hk}(\rho_t,\omega) &= \left| \tr{P_{\hk}(\rho_t-\omega)} \right| \notag\\
        &\geq \tr{P_{\hk}\rho_t} - \tr{P_{\hk}\omega} \notag\\
        &\geq 1 - \epsilon^2 - \sqrt{\frac{K}{\de} }.
    \end{align}
    Equation~(\ref{eq:20}) is easily obtained from the Cauchy-Schwarz
    inequality
    \begin{align}
        \label{eq:25}
        &\braket{D_{P_\hk}(\rho_t,\omega) }_{T \rightarrow \infty} \notag\\
        &\qquad\quad= \big\langle{\left| \tr{{P_\hk}(\rho_t-\omega)} \right|}\big\rangle_{T \rightarrow \infty} \notag\\
        &\qquad\quad\leq \big\langle{ \tr{{P_\hk}\rho_t}
          + \tr{{P_\hk}\omega}}\big\rangle_{T \rightarrow \infty} \notag\\
        &\qquad\quad= 2\tr{{P_\hk}\omega} \leq 2 \sqrt{\tr{{P_\hk}^2}\tr{\omega^2}} \notag\\
        &\qquad\quad\leq 2 \sqrt{\frac{K}{{\de}}} \ll 1.
    \end{align}
\end{proof}

The role played by equations~(\ref{eq:9}) and~(\ref{eq:20}) is
simple. \emph{(i)} The system obviously has not equilibrated, and is
still distinguishable from its equilibrium state, as long as
$D_{P_\hk}(\rho_t,\omega) $ is significantly above zero. \emph{(ii)}
On the other hand, the rank of $P_\hk$ is small enough that any system
spread over many energy levels will equilibrate with
respect to it.

For instance, if we take $K = \de/1000$ (which is extremely large) and
$\epsilon=1/2$, we have
\begin{equation}
    \label{eq:14}
    D_{P_\hk}(\rho_t,\omega) \geq \frac{1}{2} ,
    \quad \forall t \in \left[ 0,\frac{\de}{1000\sigma_E} \right],
\end{equation}
For systems composed of many particles we would typically expect $\sigma_E \sim \log(\de)$ leading to the
system taking a time of order $\frac{\de}{\log(\de)}$ to
equilibrate with respect to this measurement.

To illustrate how large this time scale can be, we describe now a
simple example -- the time scale depends only on $\sigma_E$ and $\de$,
and is largely independent on the details. Consider a system of $L$
weakly-interacting qubits with level-spacing $\delta E = 10^{-18}
\mathrm{Joules}$, the order of the excitation energy in atoms.
Defining each qubit to have equal population on each level, simple
calculations give $\sigma_E \approx \sqrt{L} \delta E $ and $\de
\approx 2^L$, and we get $T^\mathrm{slow}_\text{eq} > \frac{\hbar
  \de}{1000\sigma_E} \approx {2^L L^{-\frac{1}{2}} 10^{-19}
  \second}$~\footnote{Throughout the paper, we choose units such that
  $\hbar = 1$. Only in this example we adopt S.I. units for
  illustration purposes.}. Then, taking as little as $125$ qubits
already gives $T^\mathrm{slow}_\text{eq} \gtrsim 4.10^{17} \second$,
nearly the age of the universe and increasing exponentially with $L$.
In contrast, for the same number of particles, the
average distinguishability of a typical measurement falls below $10^{-3}$ in a
time scale of $T_\text{eq}^{\mathrm{typ}} \lesssim \frac{6000^2 \hbar
}{\sigma_E} \approx 3 \times 10^{-10} \second$. This typical time
scale decreases with ${L^{-\frac{1}{2} }}$, becoming even smaller for macroscopic
systems, and is obtained from
Theorem~\ref{thr:5} by assuming $\eta_\frac{1}{T} \lesssim 1/ \sigma_E
T$ as discussed in the main text.

Of course, the construction in Theorem~\ref{thr:1} is not the only
possibility and indeed an alternative construction is given in \cite{Goldstein13}. For instance, one can easily define measurements with a
larger number of outcomes, which also obey eq.~(\ref{eq:9}) for at
least as long as $D_{P_\hk}$ (see appendix~\ref{sec:n-outc-meas}).
It is also worth mentioning that this theorem trivially extends to the
existence of an observable and whose expectation value takes a long
time to equilibrate, since $P_\hk$ is, of course, an observable. The
distinguishability simply presents a stronger definition of
equilibration.

\section{Extension to $N$-Outcomes}
\label{sec:n-outc-meas}

Theorem~\ref{thr:4} can be generalized to $N$-outcome measurements
$\cm = \set{P_1,\ldots,P_N}$ with the bound
$\braket{D_\cm(\rho_t,\omega) }_T \leq
\frac{c}{2}\sqrt{\eta_{\frac{1}{T} }} \sum_{i =1 }^{N} \sqrt{k_i}$
where $k_i = \min{\{\rank{P_i},d-\rank{P_i}\}}$.

There are several ways one could extend Theorem~\ref{thr:1}. One is to
simply divide $\hk$ into $N-1$ smaller subspaces. Then one has
$\cm=\set{P_{\hk_1},\ldots,P_{\hk_{N-1}},\id-P_\hk}$, and the
resulting distinguishability
\begin{align}
    \label{eq:36}
    D_{\cm}(\rho_t,\omega) &= \frac{1}{2} \sum_{n = 1}^{N-1} \left| \tr{P_{\hk_n}(\rho_t-\omega)} \right| \notag\\
    &\quad\quad+ \frac{1}{2} \left| \tr{(\id-P_{\hk})(\rho_t-\omega)} \right|\notag\\
    &\geq \frac{1}{2} \left| \tr{\sum_{n = 1}^{N-1} P_{\hk_n}(\rho_t-\omega)} \right| \notag\\
    &\quad\quad+ \frac{1}{2} \left| \tr{(\id-P_{\hk})(\rho_t-\omega)} \right|\notag\\
    &= D_{P_\hk}(\rho_t,\omega)
\end{align}
takes at least as long to equilibrate as $D_{P_\hk}$.

% \bibliography{references}

\end{document}